\newcommand{\ind}{\stackrel{\mathrm{ind}}{\sim}}
\newcommand{\iid}{\stackrel{\mathrm{iid}}{\sim}}
\newtheorem{proposition}{Proposition}[section]
\def\namedlabel#1#2{\begingroup
    #2%
    \def\@currentlabel{#2}%
    \phantomsection\label{#1}\endgroup
}
\begin{document}

\begin{frontmatter}
\title{Methods for Combining Probability and Nonprobability Samples Under Unknown Overlaps\protect\thanksref{T1}}
\runtitle{Methods for Combining Probability and Nonprobability Samples}
\runauthor{T. D. Savitsky et al.}
\thankstext{T1}{U.S. Bureau of Labor Statistics, 2 Massachusetts Ave. N.E, Washington, D.C. 20212 USA}

\begin{aug}
\author[A]{\fnms{Terrance D.}~\snm{Savitsky}\ead[label=e1]{Savitsky.Terrance@bls.gov}\orcid{0000-0003-1843-3106}},
\and
\author[B]{\fnms{Matthew R.}~\snm{Williams}\ead[label=e2]{mrwilliams@rti.org}}.
\and
\author[C]{\fnms{Julie}~\snm{Gershunskaya}\ead[label=e3]{Gershunskaya.Julie@bls.gov}}
\and
\author[A]{\fnms{Vladislav}~\snm{Beresovsky}\ead[label=e4]{Beresovsky.Vladislav@bls.gov}}
\and
\author[E]{\fnms{Nels G.}~\snm{Johnson}\ead[label=e5]{nels.johnson@usda.gov}}
\address[A]{Office of Survey Methods Research,
U.S. Bureau of Labor Statistics\printead[presep={,\ }]{e1,e4}}

\address[B]{RTI International\printead[presep={,\ }]{e2}}

\address[C]{OEUS Statistical Methods Division,
U.S. Bureau of Labor Statistics\printead[presep={,\ }]{e3}}


\address[E]{USDA Forest Service\printead[presep={,\ }]{e5}}
\end{aug}

\begin{abstract}
Nonprobability (convenience) samples are increasingly sought to reduce the estimation variance for one or more population variables of interest that are estimated using a randomized survey (reference) sample by increasing the effective sample size. Estimation of a population quantity derived from a convenience sample will typically result in bias since the distribution of variables of interest in the convenience sample is different from the population distribution.  A recent set of approaches estimates inclusion probabilities for convenience sample units by specifying reference sample-weighted pseudo likelihoods.  This paper introduces a novel approach that derives the propensity score for the observed sample as a function of inclusion probabilities for the reference and convenience samples as our main result.  Our approach allows specification of a likelihood directly for the observed sample as opposed to the approximate or pseudo likelihood.  We construct a Bayesian hierarchical formulation that simultaneously estimates sample propensity scores and the convenience sample inclusion probabilities.  We use a Monte Carlo simulation study to compare our likelihood based results with the pseudo likelihood based approaches considered in the literature.
\end{abstract}

\begin{keyword}
\kwd{Survey sampling}
\kwd{Nonprobability sampling}
\kwd{Data combining}
\kwd{Inclusion probabilities}
\kwd{Exact sample likelihood}
\kwd{Bayesian hierarchical modeling}
\end{keyword}

\tableofcontents
\end{frontmatter}

\section{Introduction}

\subsection{Motivation}
With the proliferation of powerful computers and internet technologies, private data aggregators and research organizations gained the ability to relatively easily collect and store information from samples of respondents. Usually such opportunistic or ``convenience" samples are not selected using a probability based sampling design.
The non-random participation of units in such a convenience sample limits its ability to be used to construct an estimator (e.g., average income) 
of a target population quantity because the convenience sample, in general, is not expected to be representative of that population.

By contrast, probability based samples or random surveys of units represent the gold standard for cost-effectively sampling a population in a manner that allows provable guarantees about the population representativeness of target estimators (e.g., total employment, vaccination rate) composed from the observed sample where units are randomly invited to participate.  We term such a random-inclusions sample as a ``reference" sample.  
Yet, probability based samples are often relatively small, especially at finer domain levels; hence, probability based sample estimators often have large variances. In other cases,  reference samples may not include particular variables of interest, while such variables may be collected with the convenience sample.

Statistical agencies and other survey administrators are increasingly seeking ways to leverage convenience samples to construct estimators of target population quantities with measurable statistical properties.
This paper focuses on a class of approaches that suppose the nonrandom convenience sample was drawn from an unknown or latent random sampling design process such that we may treat the convenience sample as a ``pseudo" random sample.   The sampling design for the random reference sample is set by the governing statistical agency and is encoded in known sample inclusion probabilities assigned to the population of units.  These inclusion probabilities are used to form inverse probability sampling weights that are published with other variables collected 
in the reference sample. So, the task for combining the convenience sample with the reference sample to strengthen estimation (and lower the variance of estimators) is in \emph{estimation} of 
the unknown convenience sample inclusion probabilities to form ``pseudo" weights. We assume the existence of covariates, measured on both the reference and convenience samples, that encode the sampling design. Then, estimated convenience sample pseudo weights may be used with any response variable to form a weighted estimator of the target population quantity.


\subsection{Literature Review}


Early attempts to address estimation of the convenience sample inclusion probabilities using combined convenience and reference probability samples include \citet{2009elliot}, \citet{2011Dever}, \citet{2011DiSogra}. See recent reviews in \citet{2020Valliant} and \citet{2020Beaumont}, and \citet{2022Wu}.

Our goal in this paper is to estimate the convenience sample inclusion probabilities based on observed indicator $z_{i}$ that is defined on the combined convenience and probability samples set as $z_{i} = 1$ for a unit in the convenience sample, and $0$ for a unit in the reference sample.

\citet{2009elliot} and \citet{2017elliot} consider Bernoulli variable $z_{i}$ and uses relationship between $\pi_{zi}= P\{z_{i} = 1\}$, on the one hand, and the convenience and reference sample inclusion probabilities, $\pi_{ci}$ and $\pi_{ri}$ (respectively), on the other hand. One is then able to specify a logistic regression for estimation of $\pi_{ci}$.  While their result implies a practical approach, their derivation requires an assumption that the convenience and reference samples must be disjoint. That is, no unit may be included in \emph{both} the convenience and reference samples.   
They also use a two-step model estimation process that is suboptimal and often produces unbounded estimates for $\pi_{ci}$. A more efficient, one-step likelihood based estimation procedure, was proposed by \citet{beresovsky2019}.  

More recently, \citet{doi:10.1080/01621459.2019.1677241} approached the problem by considering the convenience sample inclusion indicator $R_{i}$, where $R_{i} = 1$ for unit $i$ in the convenience sample, and $0$ for unit $i$ in the finite population less those units which are members of the convenience sample. $R_{i}$ is a Bernoulli variate; however, convenience sample inclusion probabilities $\pi_{ci}=P\{ R_{i} = 1\}$ cannot be estimated directly from the Bernoulli likelihood of $R_{i}$ because the finite population is not generally available and indicator $R_{i}$ is not observed for the whole population; in particular, one does not know which units from the finite population are selected into the convenience sample. To overcome this difficulty, they partition the log-likelihood of $R_{i}$ into two terms: the sum over convenience sample units and the sum over the finite population. The latter term is approximated by a ‘’pseudo” likelihood, using inverse probability based weights, defined by observed reference sample inclusion probabilities. 

There are two shortcomings in \citet{doi:10.1080/01621459.2019.1677241}’s approach. 
First, the pseudo likelihood approximation is suboptimal because it is a noisy approximation on the observed sample that will produce a higher estimation variance. 
Second, convenience sample membership indicators $R_{i}$ are generally not observable. 
The partitioning proposed by \citet{doi:10.1080/01621459.2019.1677241} implies the existence of a different, \emph{observable}, indicator that is defined as follows. Stack together the convenience sample and finite population, so that the sample units appear in the stacked set twice: as part of the population and as the added set; let indicator $Z_{i} = 1$ for unit $i$ in the convenience sample, and $0$ for any unit $i$ in the finite population (regardless of whether it is also a part of the convenience sample). Note, however, that \citet{doi:10.1080/01621459.2019.1677241}’s likelihood does not treat observed $Z_{i}$ as a Bernoulli variate, thus potentially leading to suboptimal results.


\citet{2021valliant} propose an improvement of \citet{doi:10.1080/01621459.2019.1677241} by formulating the Bernoulli likelihood for $Z_{i}$ and providing a formula specifying a relationship between probabilities $P\{Z_{i} = 1\}$ and convenience sample inclusion probabilities $\pi_{ci}$. Would the finite population be observed, this approach would lead to efficient estimation of $\pi_{ci}$ based on the likelihood of observed $Z_{i}$. 
However, since the finite population is not observed, they still have to rely on the pseudo likelihood approach in their estimation. \citet{2021valliant} apply a two-step estimation procedure, which can be improved by solving the pseudo-likelihood based estimating equations using the one-step approach of \citet{beresovsky2019}.

\subsection{Contribution of this Paper}
We use first principles to derive a relationship between probability of being in the convenience sample set $\pi_{zi}$, on the one hand, and the convenience and reference sample inclusion probabilities, $\pi_{ci}$ and $\pi_{ri}$ (respectively), on the other hand. The result of \citet{2009elliot} can be viewed as a special case of our formula. Importantly, our approach dispenses with the requirement of disjointness between the two sample arms.  We show that our method for estimating $\pi_{ci}$ is valid under \emph{any} degree of overlapping units among the two sampling arms.  Unlike \citet{doi:10.1080/01621459.2019.1677241} and \citet{2021valliant}, our result is defined directly on the observed pooled sample with no approximation required. So, the resulting estimator of $\pi_{ci}$ from our method is more efficient than the approximate, pseudo likelihoods.

Differently from the two-step estimation process of \citet{2009elliot} or \citet{2021valliant}, we construct a Bayesian hierarchical modeling formulation discussed in the sequel that estimates both $(\pi_{zi},\pi_{ci})$ in a single step. Our method accounts for all sources of uncertainty to produce more accurate uncertainty quantification. 

Our approach is fully Bayesian for estimation of the unknown inclusion probabilities for the convenience sample units.   Notions of informativeness do not apply because the likelihood is formulated directly on the observed set. 
The model-estimated inclusion probabilities are subsequently used to compute sampling weights and those weights and the response variable are together used to construct a survey-based population estimator (such as the population mean of $y$).  

We introduce notation and list assumptions in Section~\ref{sec:prelims}. In Section~\ref{sec:bayesrule}, we detail the setup and provide the proof of the main formula underlying the proposed approach. Namely, we derive the relationship between the propensity score (defined as the probability of belonging to the convenience sample for a unit from the pooled sample), on the one hand, and the inclusion probabilities for the reference and convenience samples, on the other hand.  We construct a Bayesian hierarchical modeling formulation in Section~\ref{sec:model} that simultaneously estimates all unknown quantities, including unknown reference sample inclusion probabilities for convenience units, in a single step that accounts for all sources of uncertainty.   A Monte Carlo simulation study to compare our approach with competitor methods is presented in Section~\ref{sec:simulation}. In Section~\ref{sec:application}, we apply the proposed method to the Current Employment Statistics data, where we estimate pseudo weights for the non-probability based sample for local government in California and compute domain estimates based on these weights.  We conclude with a discussion in Section~\ref{sec:discussion}.

\section{Preliminaries}\label{sec:prelims}

We begin by introducing notation used in the exposition of our method developed in the following section. 
We follow by listing common assumptions used to develop the method.

Our set-up consists of a sample acquired under a random sampling design that we label as a ``reference" sample to contrast with availability of a nonrandom ``convenience" sample.  We term the observed pooled sample as a ``two-arm" sample with one arm denoting the reference (probability) sample and the other arm the convenience (nonprobability) sample. 

Let $S_c$ represent a non-probability (convenience) sample set drawn from sampling frame or population $U_c$, where $\vert U_{c}\vert = N_{c}$ and $\vert S_{c}\vert = n_{c}$ represent the number of units in sets $U_c$ and $S_c$, respectively; let $S_r$ denote a probability (reference) sample drawn from population $U_r$, with $\vert U_{r}\vert = N_{r}$ and $\vert S_{r}\vert = n_{r}$, the number of units, respectively, in $U_r$ and $S_r$. 

Let $U = U_{r} + U_{c}$ denote an imaginary combined set. Operator "+" here is meant to signify that sets $U_r$ and $U_c$ are ''stacked together" in such a way that overlapping units, that belong to both sets $U_r$ and $U_c$, would be included into $U$ twice. Similarly, let $S = S_{r} + S_{c}$ denote a pooled (stacked) sample. Under such a setup, $\vert U \vert=N_r+N_c=N$ and $\vert S \vert=n_r+n_c=n$.

In an abuse of notation, we index a unit contained in any population or observed sample realization by $i$, which may indicate a unit in any of the sample or population sets where the context is clear.

Let $\pi_{c}\left( \mathbf{x}_{i} \right)=P\left( i\in {{S}_{c}}|i\in U_c,\mathbf{x}_{i} \right)$ denote the probability of inclusion into observed sample set $S_c$ from $U_c$ conditional on associated design variables, $\mathbf{x}_{i}$.  We will use the term ``conditional inclusion probability" for an inclusion probability whose specification or estimation is conditioned on a set of design variables, $X = \{\mathbf{x}_{i}\}$.  These variables are used to construct the sampling design that governs the observed samples.  The design variables typically don't include one or more response variables, $y_{i}$, of inferential interest because they are not observed for the full underlying population (such their estimation motivates the administration of the survey).   

Let $\pi_{r}\left( \mathbf{x}_{i} \right)=P\left( i\in {{S}_{r}}|i\in U_r,\mathbf{x}_{i} \right)$ denote the conditional inclusion probability in $S_r$ from $U_r$.  


Let indicator variable $z_i$ on set $S$ take a value of $1$ when $i \in S_c$, and $0$ when $i \in S_r$; and let $\pi_z(\mathbf{x}_i)$ denote probabilities of $z_i=1$, given $\mathbf{x}_i$:  $\pi_z(\mathbf{x}_i)=P\left\{z_i=1|\mathbf{x}_i\right\}=P\left\{i \in S_c| i\in S,\mathbf{x}_i\right\}$.  We label $\pi_z(\mathbf{x}_i)$ as the ``propensity score" that measures the propensity or probability for a unit in the \emph{observed} joint sample, $S$, to be included in $S_{c}$.  

We will use $\pi_{ci}$ as a shorthand notation for $\pi_{c}(\mathbf{x}_{i})$ in the sequel when the context is clear and the same for $\pi_{ri}$.

\begin{description}
\item[\namedlabel{itm:latentrandom}{(C1)}] (Latent Random Mechanism)\\
     The observed convenience sample, $S_{c}$, is governed by an underlying, latent random mechanism with unknown sample inclusion probabilities, $\pi_{ci}$.
\item[\namedlabel{itm:design variables}{(C2)}] (Design Variables)\\
     $p \times 1$ variables, $X \in \mathcal{X}$, fully determine the unit conditional inclusion probabilities into $S_{r}$ and $S_{c}$ for the random selection mechanisms.  A consequence of the above set-up is that both $U_c$ and $U_r$ contain variables $\{X_{r},X_{c}\} \in \mathcal{X}$ on the same measure space.
\item[\namedlabel{itm:overlap}{(C3)}] (Overlapping Populations)\\
     Populations, $(U_{c}, U_{r})$, may overlap where units are jointly contained in each set such that overlapping units will each \emph{appear exactly twice} in $U$. As a result, observed samples $(S_{c}, S_{r})$ may also contain overlapping units such that overlapping units each appear twice in $S$.  
\item[\namedlabel{itm:independence}{(C4)}] (Independence of Samples)\\
    Conditional on $X$, $S_{r} \perp S_{c} \mid X$.  Inclusions of units into each sample arm are independent, no matter the degree of overlap between $U_{r}$ and $U_{c}$.  
\item[\namedlabel{itm:positivity}{(C5)}] (Positive Inclusion Probabilities)\\
    For all $i \in 1,\ldots,n$ and for all $\mathbf{x} \in \mathcal{X}$, conditional inclusion probabilities in each sampling arm are strictly positive / non-zero, such that $P(i \in S_{r} \mid \mathbf{x}_{i}, i \in U) > 0,~ P(i \in S_{c} \mid \mathbf{x}_{i},i \in U) > 0$, which leads to $P(i \in S \mid \mathbf{x}_{i}, U) > 0$.  These conditions result in $P(i \in S \mid i \in U) = \int P(i \in S \mid \mathbf{x}, i \in U)F(d\mathbf{x}) > 0.$
\end{description}


Assumption~\ref{itm:latentrandom} states that the non-random convenience sample may be understood as governed by a latent random process that we seek to uncover. The focus of this paper is the estimation of unknown inclusion probabilities into the convenience sample. 

Convenience sample inclusion probabilities, $\pi_{ci}$, are generally not observed for units in the convenience sample; e.g., $\forall i \in S_{c}$.  Reference sample inclusion probabilities are not generally observed for those units sampled solely into the convenience sample (and not included in the reference sample); e.g., $\forall j \in S_{c}\setminus S_{r}$.

Assumption~\ref{itm:overlap} allows for a general case of non-perfectly overlapping convenience and reference frames (from which the associated two samples are taken). 

Our method requires Assumption~\ref{itm:independence} on the independence of the reference and inclusion samples, but makes no assumptions about the degree of unit overlaps between the two samples.

It is typical to assume positive inclusion probabilities for all units as we do in Assumption~\ref{itm:positivity} for any rational sampling design in order to ensure that every unit in population $U$ may be sampled, which in turn allows for unbiased inference about the population for the observed samples taken under this assumption.

\section{Likelihood Based Estimation of Inclusion Probabilities Under Two-arm Samples}\label{sec:bayesrule}




In this section we prove an identity that is central to our proposed approach for estimation of convenience sample inclusion probabilities. The proof is made from the first principals and  under no requirement for disjointness among the sample arms. Namely, we derive the relationship between the propensity for the observed set of reference and convenience inclusion indicators and the associated inclusion probabilities in each sample.

Suppose, each frame is a subset of target population $U^0$, such that $U_c \subseteq U^0$ and $U_r \subseteq U^0$. 
Define probabilities $p_c\left( \mathbf{x}_i \right)=P\left\{i\in U_c|i \in U^0,\mathbf{x}_i\right\}$ and $p_r\left( \mathbf{x}_i \right)=P\left\{i\in U_r|i \in U^0,\mathbf{x}_i\right\}.$ Quantities $p_{r}(\mathbf{x}_{i})$ and $p_{c}(\mathbf{x}_{i})$ are \emph{known} coverage probabilities of population $U^0$ by frames $U_c$ and $U_r$ for a set of design variables $\mathbf{x}_i$.  These probabilities depend on the same design variables, $\mathbf{x}_{i}$, though units will express differing values for the common design variables.  For example, frame $U_c$ could be the subset of individuals in $U^0$ with broadband internet access and frame $U_r$ could be the subset of individuals in $U^0$ with mailable addresses. 

While conditional inclusion probabilities $\pi_{r}\left( \mathbf{x}_i \right)=P\left\{ i\in {{S}_{r}}|i\in U_r,\mathbf{x}_i \right\}$ for sample $S_{r}$ are known,
convenience sample conditional inclusion probabilities $\pi_{c}\left( \mathbf{x}_i \right)=P\left\{ i\in S_c|i\in U_c,\mathbf{x}_i \right\}$ are unknown and can be inferred from combined sample $S=S_c+S_r$, where samples $S_c$ and $S_r$ are stacked together.  As already mentioned in previous sections, samples $S_c$ and $S_r$ \emph{may overlap}. The overlapping units appear in (stacked) set $S$ \emph{twice}: as units from $S_c$ (with $z_i=1$) and as units from $S_r$ (with $z_i=0$). \\
\\
\emph{Proposition}: Assume Conditions \ref{itm:latentrandom}-\ref{itm:positivity}. Then, the following relationship between respective probabilities holds:
\begin{align}\label{eq:pi_z}
\begin{split}
{{\pi }_{z}}\left( \mathbf{x}_i \right)=\frac{{{\pi }_{c}}\left( \mathbf{x}_i \right)p_{c}\left( \mathbf{x}_i \right)}{{{\pi }_{c}}\left( \mathbf{x}_i \right)p_{c}\left( \mathbf{x}_i \right)+{{\pi }_{r}}\left( \mathbf{x}_i \right)p_{r}\left( \mathbf{x}_i \right)}.
\end{split}
\end{align}
\emph{Proof}: The combined set $S$ emerges from the following scheme displayed in Figure \ref{fig:scheme} where we stack identical populations $U^0$ of units.  The set of units in $U^0$ are \emph{duplicated} from the top-to-the-bottom stack. In the top layer we define $U_{r}$ from which we draw sample, $S_{r}$ and we do the same for the convenience population, $U_{c}$, and sample, $S_{c}$, in the bottom stack.  We see that units in $U_{r}$ and $U_{c}$ may overlap in this scheme, which allows units in $S_{r}$ and $S_{c}$ to also overlap, though we don't know the identities of overlapping units because we have duplicated them in each stack, so our notation separately indexes the same unit in the reference and convenience frames and observed samples.  This means that the sampling processes in each stack are independent from one another, but readily permit overlaps in $(U_{r},U_{c})$ and $(S_{r},S_{c})$.  We next outline the scheme of Figure \ref{fig:scheme} in our proof.

To summarize, we consider two copies of target population $U^0$, where one copy of the population includes frame $U_c$, the other copy includes $U_r$. We \emph{stack} the two copies of $U^0$ together and denote the result by $U$: $U=U^0+U^0$.

For such a setup, by the Law of Total Probability (LTP), we have:
\begin{align}\label{eq:S_c}
\begin{split}
P\left\{ i\in S_c|i \in U,\mathbf{x}_i \right\}
&=P\left\{ i\in S_c|i\in U_c,i \in U^0,\mathbf{x}_i \right\}P\left\{i\in U_c|i \in U^0,\mathbf{x}_i\right\}P\{i \in U^{0} \mid i \in U\}\\
&=\frac{1}{2}\pi_{c}\left( \mathbf{x}_i \right)p_c\left( \mathbf{x}_i \right)
\end{split}
\end{align}
We note that $i \in S_{c}$ implies that $i \in U_{c}$ since we draw the convenience sample from its associated frame, $U_{c}$. Similarly,
\begin{align}\label{eq:S_r}
\begin{split}
P\left\{ i\in S_r|i \in U,\mathbf{x}_i \right\}
&=P\left\{ i\in S_r|i\in U_r,i \in U^0,\mathbf{x}_i \right\}P\left\{i\in U_r|i \in U^0,\mathbf{x}_i\right\}P\{i \in U^{0} \mid i \in U\}\\
&=\frac{1}{2}\pi_{r}\left( \mathbf{x}_i \right)p_r\left( \mathbf{x}_i \right).
\end{split}
\end{align}
Now, because we have stacked $U^0$ twice - once for the convenience sampling process and again for the reference sampling process - thus ''shifted" sets $S_c$ and $S_r$ do not overlap (as illustrated in Figure \ref{fig:scheme}), so we may sum them below to compute the total probability of being included into the pooled sample,
\begin{align}\label{eq:S}
\begin{split}
P\left\{ i\in S|i \in U,\mathbf{x}_i \right\}
&=P\left\{ i\in S_c|i \in U,\mathbf{x}_i \right\}+P\left\{ i\in S_r|i \in U,\mathbf{x}_i \right\}\\
&=\frac{1}{2}\pi_c\left( \mathbf{x}_i \right)p_c\left( \mathbf{x}_i \right)+\frac{1}{2}\pi_{r}\left( \mathbf{x}_i \right)p_r\left( \mathbf{x}_i \right).
\end{split}
\end{align}
Finally, by the definition of conditional probability, 
\begin{align}\label{eq:S_c|S}
P\left\{ i\in S_c|i\in S,i \in U,\mathbf{x}_i \right\} &= \frac{P\left\{ i\in S_c|i \in U,\mathbf{x}_i \right\}}{P\left\{ i\in S|i \in U,\mathbf{x}_i \right\}}.
\end{align}
Equation \ref{eq:pi_z} directly follows from Equations \ref{eq:S_c}, \ref{eq:S}, and \ref{eq:S_c|S}. \\
\begin{figure}
  \captionsetup{singlelinecheck = false, 
  format= hang, 
  justification=justified, font=footnotesize, labelsep=space}
\centering
  \includegraphics[width=0.7\linewidth]{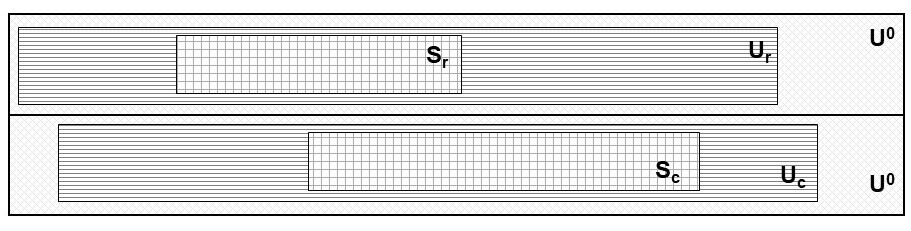}
 \caption{Gridded area represents observed convenience $S_c$ and reference $S_r$ samples stacked together to form combined sample $S$: $S=S_c+S_r$; under this scheme, if samples $S_c$ and $S_r$ overlap, the overlapping units are included in $S$ twice. Convenience sample $S_c$ is selected from population $U_c$, and reference sample $S_r$ is selected from population $U_r$, where $U_c$ and $U_r$ are subsets of target population $U^0$: $U_c \subseteq U^0$ and $U_r \subseteq U^0$. In this setup, two identical copies of target population $U^0$ are stacked together, so that $U=U^0+U^0$.}
  \label{fig:scheme}
\end{figure}
\\
We may now parameterize a likelihood for the observed indicator $z_i$ using Equation~\ref{eq:pi_z}: 
\begin{align}\label{eq:Bernoulli}
z_i\mid \mathbf{x}_{i},\bm{\beta} \ind \mbox{Bernoulli}(\pi_z(\mathbf{x}_i,\bm{\beta})).
\end{align}
The likelihood of Equation \ref{eq:Bernoulli} implicitly depends on parameter $\pi_c(\mathbf{x}_i)$ through Equation \ref{eq:pi_z}. 
We may specify a model for $\pi_c(\mathbf{x}_i) = f(\mathbf{x}_i, \bm{\beta})$ and fit parameters using either Frequentist or Bayesian approaches. We use a Bayesian approach in the sequel for its flexibility.\\
\\
\emph{Remark 1:}  Our formulation for the propensity score does not rely on disjointness among the sampling arms. Our method explicitly allows for the \emph{unknown} overlapping of units in $S_{r}$ and $S_{c}$.

\noindent\emph{Remark 2}: We can view the process as a two-phase selection. First, units are selected from target population $U^0$ to subpopulations $U_c$ and $U_r$ with probabilities $p_{c}(\mathbf{x}_i)$ and $p_{r}(\mathbf{x}_i)$, respectively. At the second phase, units are selected to respective samples with probabilities $\pi_{c}(\mathbf{x}_i)$ and $\pi_{r}(\mathbf{x}_i)$.\\

\noindent\emph{Remark 3: The equal frame scenario.} If frames $U_c$ and $U_r$ coincide, we have $p_{c}\left( \mathbf{x}_i \right)=p_{r}\left( \mathbf{x}_i \right)$, and Equation~\ref{eq:pi_z} becomes 
\begin{equation}\label{eq:U_c=U_r}
{{\pi }_{z}}\left( \mathbf{x}_i \right)=\frac{{{\pi }_{c}}\left( \mathbf{x}_i \right)}{{{\pi }_{c}}\left( \mathbf{x}_i \right)+{{\pi }_{r}}\left( \mathbf{x}_i \right)}.
\end{equation}
\\
A similar expression was derived by \cite{2009elliot} under the assumption of non-overlapping convenience and reference samples. Our approach does not require this assumption.
 
 
Equation~\ref{eq:U_c=U_r} holds even when the reference sample is the entire target population frame $U$. In this case, $\pi_r\left( \mathbf{x}_i \right)=1$ for all units and Equation~\ref{eq:U_c=U_r} reduces to  
\begin{equation}
{{\pi }_{z}}\left( \mathbf{x}_i \right)=\frac{{\pi }_{c}\left( \mathbf{x}_i \right)}{{{\pi }_{c}}\left( \mathbf{x}_i \right)+1},
\end{equation}
which is the same as that of \citet{2021valliant} before they approximate it on the observed sample. 
We label this as a ``one-arm" case. One important simplification in the ``one-arm" case is that $\pi_r$'s are known (and equal to 1) for all units in combined set $S$.\\ 

We derive the same result as presented in Equation~\ref{eq:pi_z} under perfectly overlapping frames by extending a different result from the economics literature in  Appendix~\ref{sec:symmetric}.

\section{Hierarchical Estimation Model}\label{sec:model}
We next specify a hierarchical probability model to estimate convenience sample inclusion probabilities for the units in the convenience sample.  

We focus on the equal frame scenario where both the reference and convenience samples are assumed to be drawn from the same underlying frame to define our Bayesian hierarchical model and simulation setup. We do so for ease and clarity of explanation, with no loss of generality. In the common case where the frames do not perfectly overlap, we would use Equation \ref{eq:pi_z} which inputs coverage probabilities $p_c\left( \mathbf{x}_i \right)$ and $p_r\left( \mathbf{x}_i \right)$ as \textit{known} quantities.

We assume that our covariates $\mathbf{x}$ fully account for the sampling design. Thus, our goal is to formulate a model to estimate the inclusion probabilities of convenience sample units given covariates $\mathbf{x}$. We use them to formulate inverse probabilities based pseudo sampling weights to construct a survey expansion estimator using response variable of interest $y$. 

\subsection{Construction of unknown marginal inclusion probabilities, \texorpdfstring{$(\pi_{\ell i})$}{Lg}}
We parameterize our model using $\pi_{\ell i} = P\left\{i \in S_{\ell}\mid i \in U_{\ell},\mathbf{x}_{i}\right\}$ to be the conditional inclusion probability for unit $i \in 1,\ldots,\left(n = n_r + n_c\right)$ in sampling arm $\ell \in (r,c)$; that is, $\ell$ indexes whether the conditional inclusion probability for unit $i$ is specified for the reference ($\ell=r$) or the convenience ($\ell = c$) sampling arms.  This modeling set-up only assumes that we observe $\pi_{\ell i}$ for $\ell = r$ and $i \in S_{r}$, the conditional sampling inclusion probabilities for the units observed in the reference sample.

Our model, however, will estimate $(\pi_{\ell i})$ for \emph{all} units, $i \in (1,\ldots,n)$, for both $ \ell = r$ and $\ell = c$ sampling arms.  Of particular note, our model estimates $\pi_{ri}$ for $i \in S_{c}$, the reference sample inclusion probabilities for the convenience units.  So, estimation of the model does not require known $\pi_{ri}$ for all units.  The model will further simultaneously estimate $\pi_{ci}$ for $i \in S_{c}$, the convenience sample inclusion probabilities for the convenience units (units in the convenience sampling arm), which is the primary goal of the model.

A Bayesian hierarchical model is able to be richly parameterized to estimate this matrix of only partially observed conditional inclusion probabilities through the borrowing of strength in the specifications of functional forms and prior distributions to follow.

\subsection{Spline functional form for $\mbox{logit}(\pi_{\ell i})$}
We input an $n \times K$ matrix of design variables, $X = (\mathbf{x}_{1},\ldots,\mathbf{x}_{K})$, where $\mathbf{x}_{k}$ denotes the $n \times 1$ vector for design variable $k$.  We want our model specification to express a flexible functional form,
\begin{equation}
    \mbox{logit}(\pi_{\ell i}) = f(x_{1i},\ldots,x_{Ki}),
\end{equation}
where $f(\cdot)$ may be estimated as non-linear by the data.  Complex sampling designs may utilize design variables with different emphases on different portions of the design space, which will induce such non-linearity.  Two common examples are (i) scaling inclusion probabilities to exactly meet target sample sizes and (ii) thresholding size measures to create certainty units (with $\pi_{\ell i} = 1$). Both features induce non-linearity on the logit scale.

To accomplish the above non-linear formulation we utilize a B-spline basis due to its flexibility and computational tractability for illustration (of an implementation of our main result in Equation~\ref{eq:pi_z}).  We may also choose alternative non-linear formulations, such as a Gaussian process, to achieve similar results, but computation for  the Gaussian process scales poorly in the number of data observations.  The use of Bayesian adaptive regression trees is not easily purposed to our modeling set-up for estimating latent convenience sample inclusion probabilities \citep{Chipman_2010}.

A B-spline basis is specified for \emph{each} predictor where $Q \times 1$, $g(x_{ki})$ is a B-spline basis vector with $C$ denoting the number of bases set equal to the number of knots + number of spline degrees - 1.  We use the vector of B-splines for each predictor $k$ to formulate,
\begin{equation}\label{eq:ff}
    \mbox{logit}(\pi_{\ell i}) = \mu_{x,\ell i} = \mathbf{x}_{i}^{\top}\bm{\gamma}_{x,\ell} + \mathop{\sum}_{k=1}^{K} g(x_{ki})^{\top}\bm{\beta}_{\ell k},
\end{equation}
where $\mathbf{x}_{i}^{\top}\bm{\gamma}_{x,\ell}$ is a linear component and $\bm{\beta}_{\ell k}$ is a $~Q\times 1$ vector of coefficients for the spline term for each predictor $k$ (column of $X$) that parameterizes the possibility for a non-linear functional form for each of the $K$ predictors.  The spline term specifies distinct regression coefficients for each sampling arm, $\ell$, and design variable, $k$, to allow estimation flexibility that makes few assumptions about the functional form for $\mbox{logit}(\pi_{\ell i})$.  In this sense, even if we had only used the linear term, the use of distinct spline term regression coefficients for each predictor and sampling arm makes the model marginally non-linear across the data.
\subsection{Random walk of order 1 (autoregressive) horseshoe prior on $\beta_{\ell k}$}
We select a random walk of order 1 (based on first differences) formulation for the prior on each component of the $Q\times 1$, $\bm{\beta}_{\ell k}$ of the spline term with,
\begin{equation}
    \beta_{\ell k q} \mid  \beta_{\ell k q-1}, \kappa_{\ell k}\tau_{\ell} \sim \mathcal{N}\left(\beta_{\ell k q-1}, \kappa_{\ell k}\tau_{\ell}\right),~ c = 2,\ldots,Q,
\end{equation}
and $\beta_{\ell k 1}  \sim \mathcal{N}\left(0, \kappa_{\ell k}\tau_{\ell}\right)$ denotes a spline basis (used for each predictor $k \in 1,\ldots,K$). All to say, the random walk prior is constructed for the B-spline coefficients defined on each predictor, $x_{k}$. This random walk form for the prior enforces smoothness over the estimated regression coefficients such that the resulting estimated fit is less sensitive to the number of (spline) knots used and avoids overfitting.   The overall mean intercept is identified  by excluding an intercept from the linear term in Equation~\ref{eq:ff}.

We also encourage sparsity in the number of estimated non-zero, $(\bm{\beta}_{\ell k})_{k=1}^{K}$, as a group for predictor $K$, by using a set of $K$ ``local" scale (standard deviation) shrinkage parameters, $\kappa_{\ell k}$, where a value for $\kappa_{\ell k^{'}}$ near $0$ for some predictor $k^{'}$ will shrink all $Q \times 1$ coefficients, $\bm{\beta}_{\ell k^{'}}$, to $0$. Similarly, global scale (standard deviation) shrinkage parameter, $\tau_{\ell}$, would shrink \emph{all} $(\bm{\beta}_{\ell k})_{k=1}^{K}$ to $0$, which favors the linear model term in this limit.   We place half Cauchy priors, $\kappa_{\ell k} \ind C^{+}(0,1)$ and $\tau_{\ell} \sim C^{+}(0,1)$, respectively.  

This use of local and global shrinkage parameters under a half Cauchy prior is known as the horseshoe prior \citep{Carvalho_handlingsparsity}.  If one marginalizes out the global and local scale shrinkage parameters under the half Cauchy priors, the marginal prior distribution for $\beta_{\ell k q}$ will have a large spike at 0 (driving sparsity), but with very heavy tails allowing the coefficient values to ``escape" the shrinkage where the data provide support.   By tying together the priors for $(\beta_{\ell k q})_{q=1}^{Q}$ the spline coefficients for predictor $k$ escaping shrinkage to $0$ will be correlated and relatively smooth.

The vector of $K\times 1$ fixed effects parameters for sampling arm $\ell$ are each drawn as,
\begin{equation}
    \gamma_{x,\ell k} \iid \mathcal{N}\left(0,\tau_{\gamma}\right),
\end{equation}
where $\tau_\gamma \sim \mbox{student-t}^{+}(\mbox{d.f.} = 3, 0, 1)$, where we use a relatively flat prior for $\tau_{\gamma}$.

\subsection{Joint likelihood for $(z_{i})_{i \in S}$ and $({\pi}_{ri})_{i\in S_{r}}$}
We connect our parameters to the data with two likelihood terms.  The first term constructs a Bernoulli likelihood for the observed sample,
\begin{equation}\label{bernlike}
    z_{i} \mid \pi_{zi} \ind \mbox{Bernoulli}(\pi_{zi}),
\end{equation}
where we recall from Equation~\ref{eq:U_c=U_r} that, $\pi_{zi} = \pi_{ci} / (\pi_{ci} + \pi_{ri})$ such that this likelihood provides information for estimation of $\pi_{ci}$ for $i \in 1,\ldots,n$, as well as $\pi_{ri}$ for $i \in S_{c}$. 
 
We further borrow strength from the known reference sample conditional inclusion probabilities for the observed reference sample to estimate the unknown conditional inclusion probabilities by modeling the known reference sample inclusion probabilities for the observed reference sample units as a function of our parameters with,
\begin{equation}
\label{eq:likepir}
    \mbox{logit}({\pi}_{ri}) \ind \mathcal{N}(\mu_{x,ri},\phi),
\end{equation}
only for units $i \in S_{r}$ such that observed ${\pi}_{ri}$ is used to provide information about latent $\mu_{x,\ell i}$ for both sampling arms $(\ell \in (r,c)$ and all units ($i \in 1,\ldots,n$) based on their intercorrelations allowed by the prior distribution (and updated by the data).  We recall from Equation~\ref{eq:ff} that each $\mu_{x,\ell i}$ is, in turn, connected with each $\pi_{\ell i}$. 

The detailed Stan \citep{gelman2015a} script that enumerates the likelihood and prior distributions for all parameters and hyper-parameters is included in Appendix~\ref{a:script}.

\subsection{Bayesian hierarchical model implementations for pseudo likelihoods}

We implement the pseudo likelihood formulations of \citet{doi:10.1080/01621459.2019.1677241} and \citet{2021valliant} in the simulation study of Section~\ref{sec:simulation} under our Bayesian hierarchical formulation.   We accomplish these implementations by replacing the exact Bernoulli likelihood for the observed sample under our method of Section~\ref{sec:bayesrule} with approximate likelihoods for the underlying population estimated on the sample.   Both methods parameterize only $\pi_{ci}$ for convenience units and use the inclusion probabilities for the reference sample as a plug-in.   Let vector $\bm{\theta}_{c} = \left(\bm{\gamma}_{x,c},(\bm{\beta}_{ck})_{k=1}^{K}\right)$ denote the parameters in the non-linear logistic regression model for $\pi_{ci}(\mathbf{x}_{i},\bm{\theta}_{c})$.  \citet{doi:10.1080/01621459.2019.1677241} specify the following pseudo log-likelihood, 

\begin{equation}\label{eq:wulike}
    \ell(\bm{\theta}_{c}) = \mathop{\sum}_{i \in S_{c}}\log\left(\frac{\pi_{ci}(\mathbf{x}_{i},\bm{\theta}_{c})}{1-\pi_{ci}(\mathbf{x}_{i},\bm{\theta}_{c})}\right) + \mathop{\sum}_{i \in S_{r}}d_{ri}\log\left(1-\pi_{ci}\left(\mathbf{x}_{i},\bm{\theta}_{c}\right)\right),
\end{equation}
where $d_{r} = 1/\pi_{ri}$.  
Equation~\ref{eq:wulike} uses a survey approximation for the population in the second term by inverse probability weighting the reference sample contribution.   This pseudo likelihood will tend to produce overly optimistic (narrow) credibility intervals because it uses $\pi_{ri}$ as a plug-in (rather than co-modeling it).  The first term will also induce a noisy estimator for unit with low values for $\pi_{ci}$, which will occur when there is a lot of separation in the covariate, $\mathbf{x}$, values between the convenience and reference samples.

As discussed in the introduction, \citet{2021valliant} develop a Bernoulli likelihood for the population augmented by the convenience sample.  This approach specifies indicator $Z_{i} = 1$ if unit $i$ is in the convenience sample, or $0$ if it is the finite population and develops an associated propensity score, $\pi_{Zi} = \pi_{ci}/(\pi_{ci} + 1)$. This expression is a special case of our formula derived in Section~\ref{sec:bayesrule} where one arm is the convenience sample and the other arm is the entire population. So the exact likelihood specified in \citet{2021valliant} is a special case of our method under a one-arm sample set-up.  As with \citet{doi:10.1080/01621459.2019.1677241}, they approximate their log-likelihood on the observed sample with
\begin{equation}
    \ell(\bm{\theta}_{c}) = \mathop{\sum}_{i \in S_{c}}\log\left(\pi_{Zi}(\mathbf{x}_{i},\bm{\theta}_{c})\right) + \mathop{\sum}_{i \in S_{r}}d_{ri}\log\left(1-\pi_{Zi}(\mathbf{x}_{i},\bm{\theta}_{c})\right).
\end{equation}
This approximate likelihood will also tend to produce overly optimistic credibility intervals because it doesn't account for the uncertainty in the generation of samples (by plugging in the reference sample weights, $d_{ri}$, instead of modeling them). 

Both comparator methods are implemented under our hierarchical Bayesian model such that they are benefited from our flexible, nonlinear formulation for the logit of the convenience sample inclusion probabilities and our autoregressive smoothing on spline coefficients.  In this sense, these implementations are more robust than the estimating equation approaches used by the authors. In addition, in our implementation of \citet{2021valliant}, we estimate convenience sample probabilities in a single step.    


\section{Simulation Study}\label{sec:simulation}
We construct a finite population and two sets each of reference and convenience samples characterized by low and high overlaps in number of overlapping units between the two sampling arms.  We perform this construction in each iteration of a Monte Carlo simulation study designed to compare the repeated sample (frequentist) properties of our two-arm exact likelihood approach with those of the pseudo likelihood approaches.  We compare bias, root mean squared error and coverage of $90\%$ credibility intervals. 

\subsection{Simulation Settings}
To compare performance variability across multiple realized populations, we generate $M = 30$ distinct populations of size $N = 4000$. We chose a relatively small population size and large sampling fractions to explore the full range of $\pi_c \in [0,1]$. A large sampling fraction and large inclusions probabilities is also reasonable in establishment surveys. We set the reference sample size at $n_{r} = 400$ using a proportion-to-size (PPS) sampling. We select convenience samples of size $n_{c} \approx 800$ using Poisson sampling.  We recall our assumption that the convenience sample arises from a latent random sampling mechanism with unknown inclusion probabilities. We select two distinct independent convenience samples from each population, which we deem `high' and `low' overlap in comparison to the reference sample. High-overlap convenience samples have selection probabilities $\pi_c$ with a similar relationship with population covariates $X$ compared to the selection probabilities $\pi_r$ for the reference sample. In constrast, low-overlap convenience sample probabilities $\pi_c$ have the opposite relationship with covariates.

For each population, we let $X$ have $K = 5$ columns, including an intercept, three independent binary variables (A,B,C) with $P(\mathbf{x}_i = 1) = 0.5$, and a continuous predictor drawn from a standard normal distribution $N(0,1)$. We do not explore the situation of correlated design variables in this simulation study. We generate the outcome $y_i$ as a lognormal distribution with centrality parameter $\mu_i = \mathbf{x}_i \beta$ and scale parameter 2: $\log(y_i) \sim \mathcal{N}(\mu_i, 2)$. The generating parameters are $(\beta_{cont}, \beta_0, \beta_A, \beta_B, \beta_C) = (1.0, 0.5, 0.0, -0.5, -1.0)$. The inclusion probabilities for the reference sample are constructed by first setting size measure $s_{r_i} = \log(\exp(\mu_i) + 1)$. We then convert size to inclusion probabilities $\pi_{r_i}$ via the \emph{inclusionprobabilities()} function from the `sampling' package in R \citep{samplingR}. Most sizes of $s_{r_i} \propto \pi_{r_i}$, however the largest size values get mapped to $\pi_{r_i} = 1$, thus inducing a non-linear `kink' in the mapping from $s_{r_i} \rightarrow \pi_{r_i}$. We note that our estimation model $\mbox{logit}(\pi_{r_i}) = \mu_i$ is misspecified leading to a non-linear relationship with the $x_i$. This motivates the use of splines to capture non-linear relationships and add robustness to the model estimation.  It is common for the largest-sized units to be included in the sample with probability $1$.  

The inclusion probabilities for the convenience samples are inverse logit transformations of linear predictors with an offset adjustment to the intercept to approximately meet a target sample size: $\pi_{c_i} = \mbox{logit}^{-1}(\mathbf{x}_i \bm{\beta} + \mbox{off})$. For the high-overlap sample:
\newline $(\beta_{cont}, \beta_0, \beta_A, \beta_B, \beta_C, \mbox{off}) = (0.500, 0.175, -0.150, -0.475, -0.800, -0.900)$. For the low overlap sample: $(\beta_{cont}, \beta_0, \beta_A, \beta_B, \beta_C, \mbox{off}) = (-1.00, -0.50, 0.00, 0.50, 1.00, -2.23)$.
It is generally more challenging to estimate convenience sample inclusion probabilities when there is a lower overlap of predictor values with the reference sample.

\begin{figure}
\centering
\includegraphics[height = 0.48\textwidth,
		page = 1,clip = true, trim = 0.0in 0.0in 0.in 0.in]{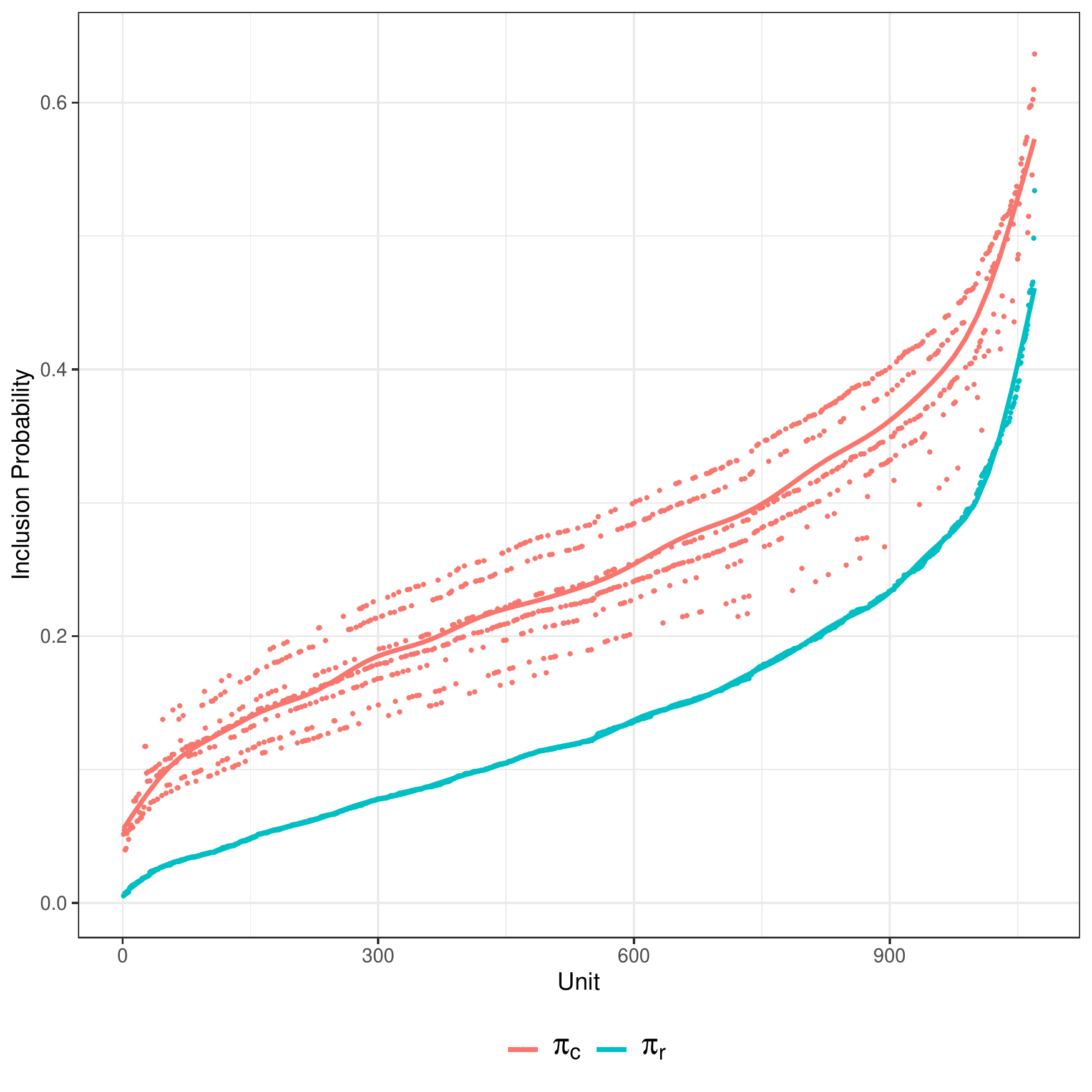}
		\includegraphics[height  = 0.48\textwidth,
		page = 2,clip = true, trim = 0.0in 0.0in 0.in 0.in]{rev_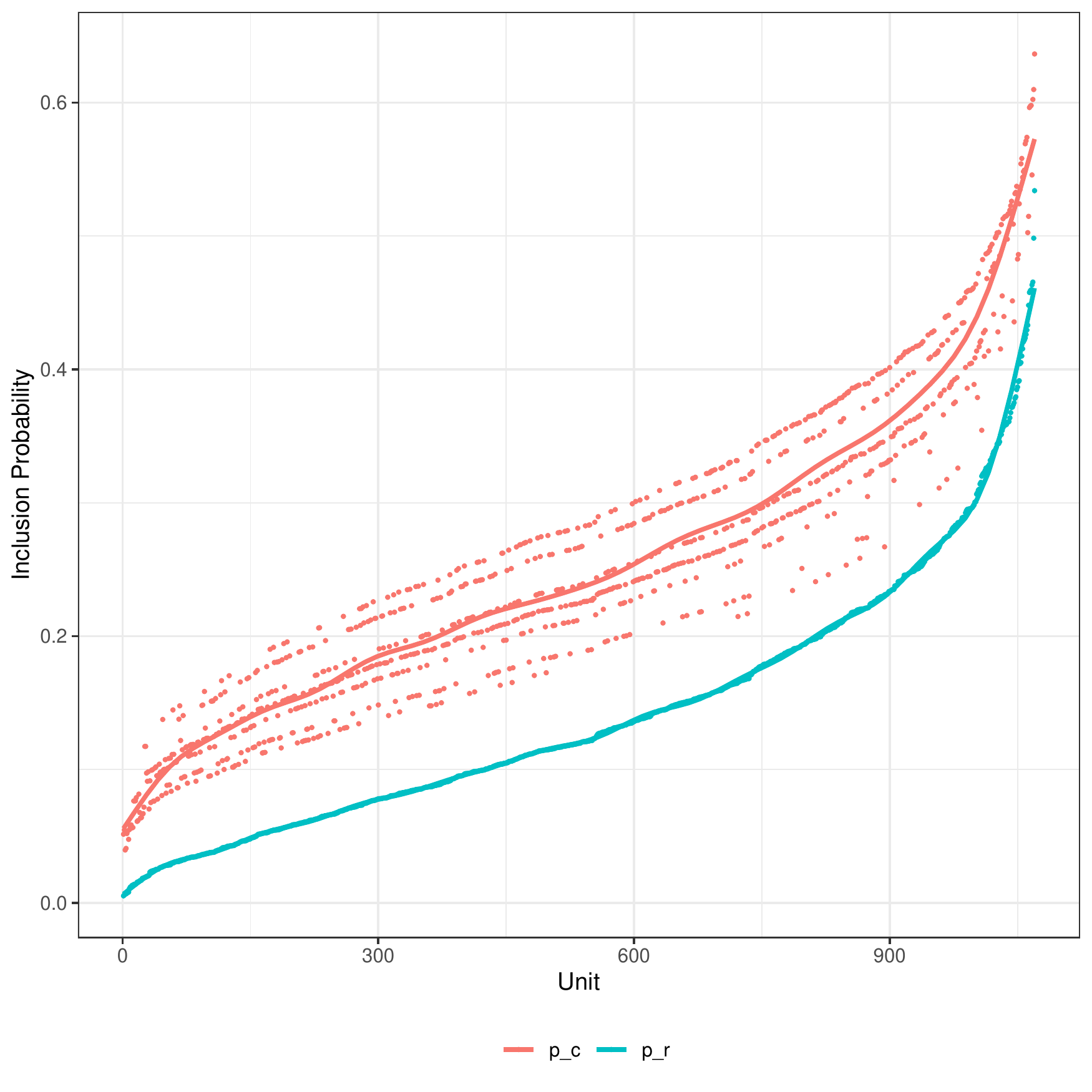}
\caption{Comparison of inclusion probabilities for a single realization of reference  and convenience samples for high overlap (left) and low overlap (right) designs. Units index the combined sample.
    }
\label{fig:sample_overlap}
\end{figure}

\begin{figure}
\centering
\includegraphics[height = 0.48\textwidth,
		page = 1,clip = true, trim = 0.0in 0.0in 0.in 0.in]{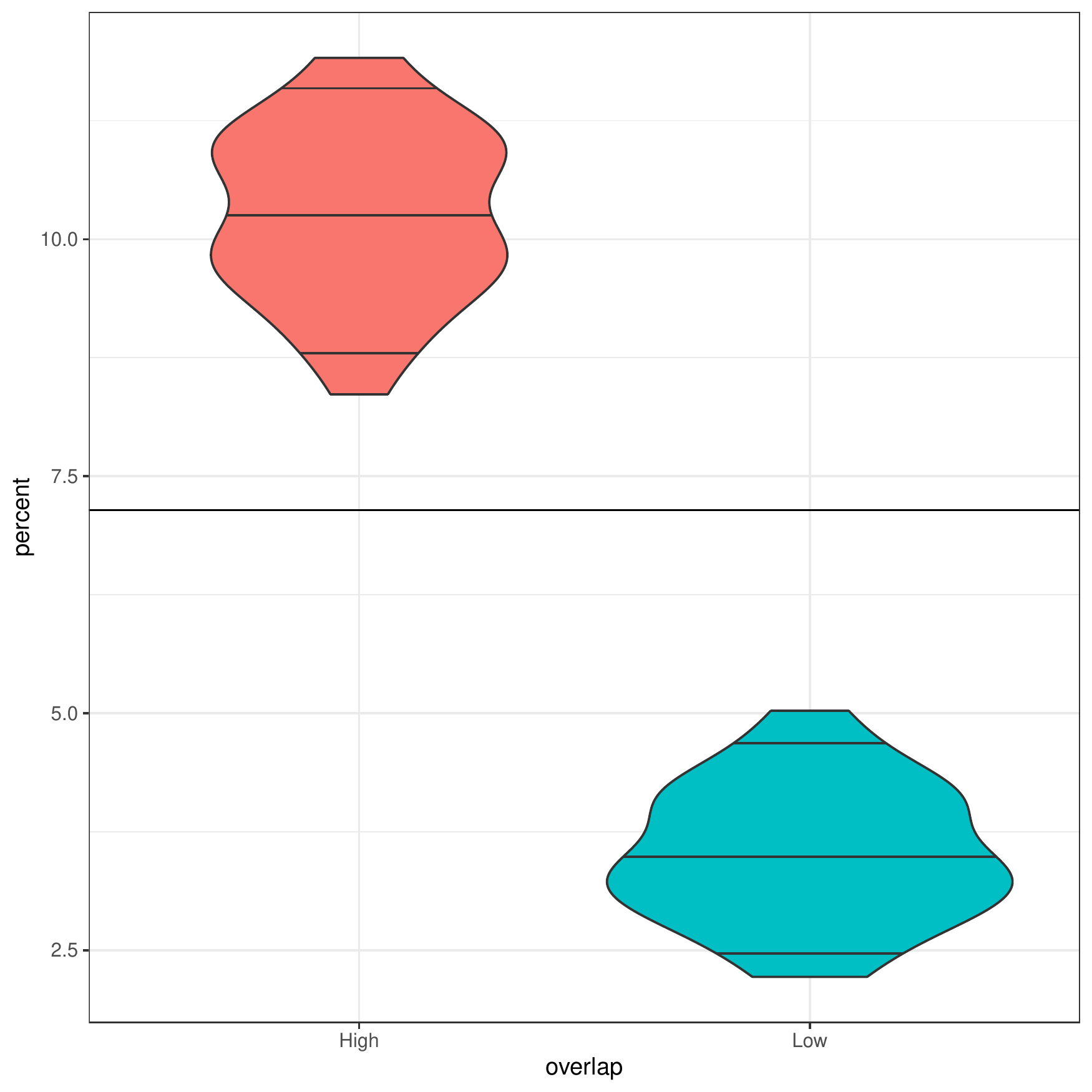}
\caption{Percent of pooled sample present in both reference and convenience samples by type of convenience sample (High and Low). Distributions over 30 population and sample realizations. Expected percent for two independent simple random samples (solid horizontal line).
    }
\label{fig:sample_overlap_count}
\end{figure}

Each plot panel in Figure~\ref{fig:sample_overlap} compares the generated reference sample inclusion probabilities to the convenience sample inclusion probabilities for under a high-overlap size-based sampling design on the left and a low-overlap sampling design on the right.  Each plot panel orders units by reference sample inclusion probabilities low-to-high along the x-axis.  The degrees of similarity in the reference and convenience sample inclusion probabilities are achieved by manipulating the size and direction of the vector of coefficients $\bm{\beta}$ for the design variables.   

Figure \ref{fig:sample_overlap_count} compares the percent of the total combined sample (reference and convenience) units which overlap (e.g. is present in both samples) for realizations of `high' and `low' overlap convenience samples as well as a baseline expected overlap from two independent simple random samples. As indicated by their labels, `high' overlap samples have a larger proportion of individuals in both the reference and convenience sample than each of a sample of the same sizes based on SRS and a `low' overlap sample.

\subsection{Results - Estimating convenience sample inclusion probabilities, $\pi_{ci}$}

We begin our presentation of results by comparing the relative performances of exact (two-arm) and pseudo likelihood methods for the estimation of the convenience sample inclusion probabilities, $\pi_{ci}~i \in S_{c}$ based on our known true values.

The plot panels of Figure~\ref{fig:main_hilo} present the mean of bias (over the Monte Carlo iterations), the square root of the mean squared error and the (frequentist) coverage and average widths of $90\%$ credibility intervals from left-to-right in the matrix of plot panels.  These values are computed pointwise for increasing values of the true conditional inclusion probabilities from left-to-right in each plot panel.  The top row of plot panels presents results for high-overlap (convenience and reference) sample datasets and the bottom row presents results for low-overlap sample datasets.   


We compare $3$ methods:
\begin{enumerate}
    \item two-arm - constructs an exact likelihood from our main method of Equation~\ref{eq:U_c=U_r} under a two-arm convenience and reference sample set-up.
   \item CLW - The pseudo likelihood method of \citet{doi:10.1080/01621459.2019.1677241}.
   \item WVL - The pseudo likelihood method of \citet{2021valliant}.
\end{enumerate}
For the two pseudo likehood methods, we implement each directly as pseudo posteriors and with a post-processing adjustment using a sandwich estimate of an asymptotic covariance matrix \citep{williams2021}. The stability of estimation of the sandwich estimator for CLW was poor. In order to compensate, we first used a scalar down-weighting (or tempering) of all observations (both convenience and reference) such that the sum of the sum of the individual weights was equal to the total sample size. See \cite{10.1214/18-AOS1712} for a detailed discussion on the stabilization of posterior estimation using such fractional weights.

We see that our two-arm method 
produces little mean bias for both small and large values of the true convenience sample inclusion probabilities and 
achieves nominal coverage of the $90\%$ credibility intervals.  

By contrast, both pseudo likelihood methods perform similarly to one another with high variability (RMSE) and severe undercoverage for medium-to-larger values of true convenience sample inclusion probabilities $\pi_c$. The collapse in coverage becomes worse for the low overlap dataset as the use of reference sample weights as a plug-in both under-estimates the uncertainty introduced by the reference sample design and induces noise over repeated samples. For high overlap, a post-processing adjustment for the pseudo likelihood methods improves coverage at the expense of increasing the width of the corresponding interval beyond that of the two-arm method. For low overlap, the post-processing adjustment can only adjust variance but not bias. In fact it may even amplify bias. Coverage is improved, but at the cost of very wide intervals.

\begin{figure}
\centering
\includegraphics[width = 0.95\textwidth,
		page = 1,clip = true, trim = 0.0in 0.0in 0.in 0.45in]{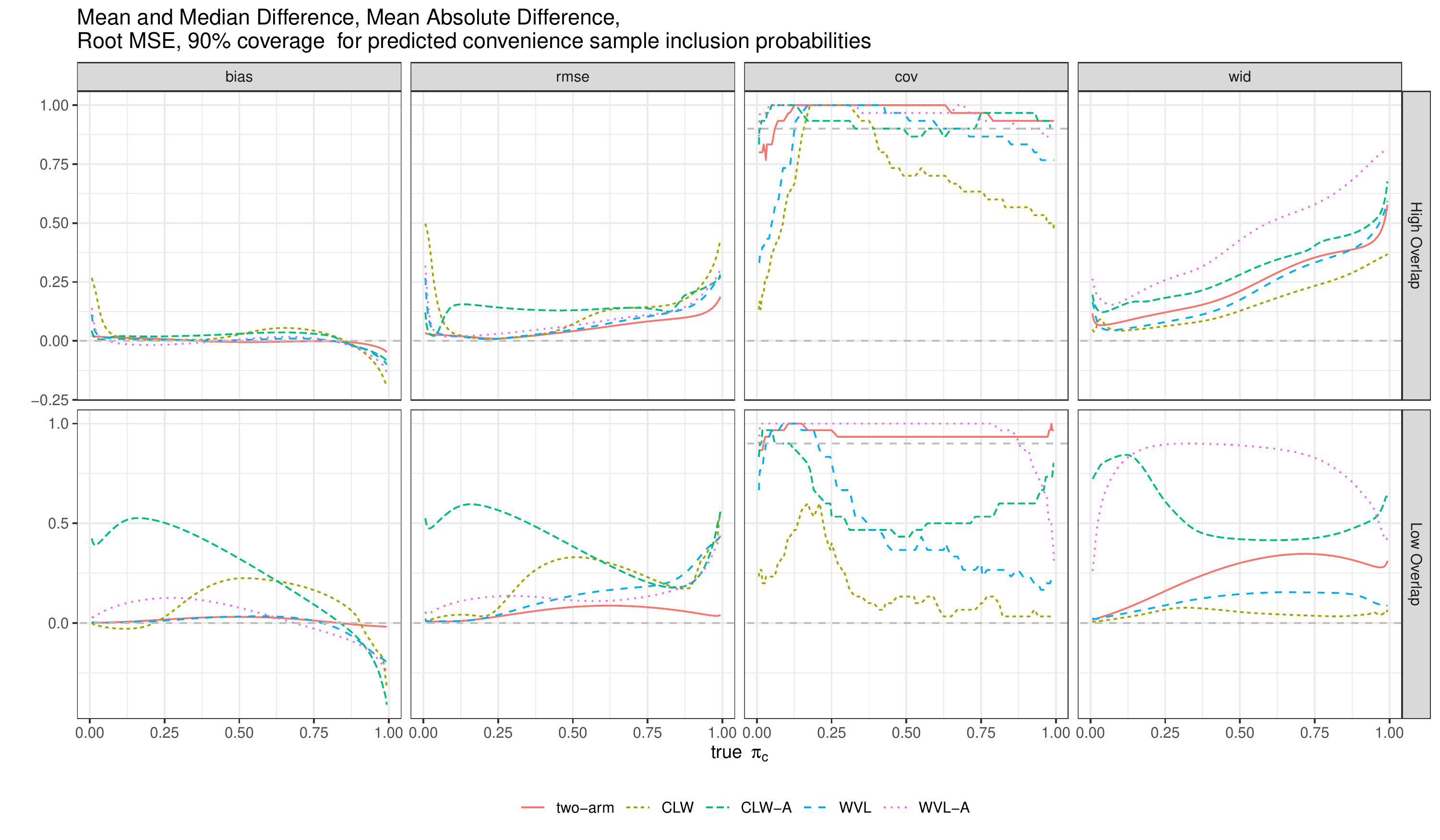}

\caption{Performance of main approach for high overlap (top) and low overlap (bottom) samples across repeated simulations. Using informative reference sample for main approach (red), compare to pseudo-likehood based methods CLW (yellow) and WVL (blue). Adjusted versions of pseudo-likehood adjust based on an estimated sandwich covariance matrix: CLW-A (green) and WVL-A (purple). Left to Right: Mean Bias, Square Root Mean Squared Error, and Coverage and Interval Width for 90\% intervals for predicting convenience sample inclusion probabilities $\pi_c$
    }
\label{fig:main_hilo}
\end{figure}

Each plot panel in Figure~\ref{fig:main_hilo_curve} compares the average and pointwise $95\%$ frequentist confidence intervals for the posterior mean estimator of $\pi_c$ over the Monte Carlo iterations.  The left-hand panel represents the results for the high-overlap datasets and the right-hand panel for the low-overlap datasets.  We see that the 
two-arm exact likelihood method produces little-to-no bias. 
 By contrast, the pseudo likelihood methods produce an enormous amount of variability. 

While we would expect the performance of the pseudo likelihood methods to improve as the sample sizes increases since both methods produce consistent estimators, our chosen sample size is a very typical domain sample size for a survey such that the superior performance of our exact likelihood methods at these moderate sample sizes (for ($n_{r}, n_{c})$) is an important result that demonstrates much faster convergence for our approach.

\begin{figure}
\centering
\includegraphics[width = 0.95\textwidth,
		page = 2,clip = true, trim = 0.0in 0.0in 0.in 0.45in]{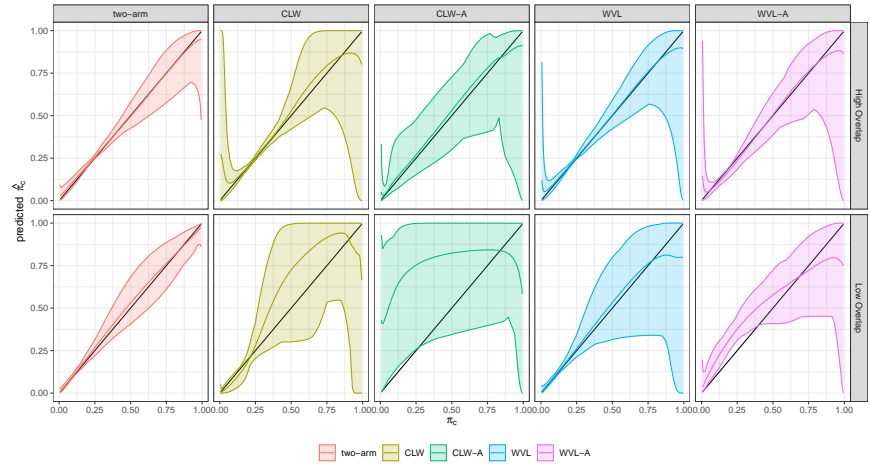}

\caption{Average and pointwise $95\%$ frequentist confidence intervals for the posterior mean estimator of $\pi_c$ over the Monte Carlo iterations for high overlap (top) and low overlap (bottom) samples. Using informative reference sample for main approach (red), compare to pseudo-likehood based methods CLW (yellow) and WVL (blue). Adjusted versions of pseudo-likehood adjust based on an estimated sandwich covariance matrix: CLW-A (green) and WVL-A (purple).
    }
\label{fig:main_hilo_curve}
\end{figure}

We use our method to combine convenience and reference sample inclusion probabilities $(\pi_{ci},\pi_{ri})$ to construct a non-model-based survey direct estimator for the population mean, $\mu$, of some response variable of interest, $y$, that is correlated with the survey design variables, $\mathbf{x}$ in Appendix~\ref{a:mu}.  We compare our resulting population mean estimator to that estimated from the two pseudo likelihood methods.

\section{Application}\label{sec:application}

We next present results from applying our proposed method to estimate pseudo weights for a quota sample of government employment collected in the Current Employment Statistics (CES) survey administered by the U.S. Bureau of Labor Statistics (BLS). We subsequently apply the pseudo weights to estimate local government employment for the Metropolitan Statistical Areas (MSA) of California.

The CES uses probability-based sampling design for private industries. For government employers, however, the CES estimates are based on a non-probability sample. The employment coverage in government industries is generally high, so that the resulting unweighted estimates based on such a non-probability sample usually provide acceptable level of precision. For measuring employment of local governments, however, such an unweighted quota (convenience) sample based estimate may be biased. 

We will use the quarterly census of employment and wages (QCEW), which is a census instrument administered by BLS that measures establishment employment, as our ``reference sample" to estimate the pseudo weights for the CES government convenience sample. As a large census instrument, QCEW quality checking and reporting are lagged by many months, so the CES is used to provide the current month employment. The QCEW employment levels are maintained in an administrative source called the longitudinal database (LDB).

To estimate pseudo weights, we stack together the LDB and the CES sample and apply Equation~\ref{eq:pi_z} that links the propensity score for the pooled sample, $\pi_{zi}$, to the convenience (CES) inclusion probabilities, $\pi_{ci}$, and the reference sample inclusion probabilities, $\pi_{ri}$. 

The LDB is designed to cover the target population; therefore, we set ${\pi}_r=1$ for all units in LDB, regardless of $\mathbf{x}_i$. In addition, coverage probabilities are set $p_r=1$ and $p_c=1$ for all units.  In the case that LDB frame were insufficient and didn't cover all of the CES sample we could set $p_{r} < 1$ in our set-up to account for it.  We observe: $z=1$ for units in the CES sample and $z=0$ for units in the LDB. Note, even though CES units are a subset of LDB, we are not concerned with matching the CES to LDB. Instead, we stack the two sets together. Thus, CES units appear in the stacked set \emph{twice}: once with $z=1$ and again with $z=0$.

We apply our model to estimate probabilities ${\pi }_c\left( \mathbf{x}_i \right)$ of inclusion into the CES sample, where $\mathbf{x}_i$ is employment level of unit $i$ in September (the benchmark month). We formulate our model with domain level random effects $u_d$ and use splines as described in Section \ref{sec:model}.

The fit performance is assessed by comparing CES based estimates to QCEW-based employment levels that become available to researchers on a lagged basis. Due to different seasonality patterns between the employment series derived from QCEW data and CES, the most meaningful comparison of the two series is after $12$ months of estimation. Mimicking the production setup, we obtain level estimates after $12$ months of estimation from monthly ratio estimates, $\hat{R}_{d,\tau}$, for a set of domains $d \in 1,\ldots,N$ at month $\tau$. The monthly ratio estimates are multiplied together and by the September starting level, $Y_{d,0}^{{}}$, that is available to CES at the start of the estimation cycle,
\begin{equation*}
\hat{Y}_{d,12}^{{}}=Y_{d,0}^{{}}\prod\limits_{\tau =1}^{12}{{{{\hat{R}}}_{d,\tau }}}.
\end{equation*}
Monthly ratio estimates $\hat{R}_{d,\tau}$ are obtained using a link relative (LR) estimator, that is a ratio of the sum of the current month to the sum of previous month responses, over set $s_{d,\tau}$ of CES respondents at a given month $\tau$ in domain $d$: $\hat{R}_{d,\tau}^{LR}=\sum_{i \in s_{d,\tau}}{y_{i,\tau}}/\sum_{i \in s_{d,\tau}}{y_{i,\tau-1}}$. Once we apply our approach to obtain pseudo weights $w_i$, we use them in the analogous formula to form a ''pseudo" \emph{weighted} link relative (WLR) estimator, $\hat{R}_{d,\tau}^{WLR}=\sum_{i \in s_{d,\tau}}{w_{i}y_{i,\tau}}/\sum_{i \in s_{d,\tau}}{w_{i}y_{i,\tau-1}}$.

We extract the posterior means of the pseudo weights and apply them to each month in the estimation cycle. Figure \ref{fig:annualcycle_msa} displays examples of estimates of employment levels over the 12 months of the estimation cycle for California MSAs under both the LR and WLR estimators, both compared to the QCEW Historical (Hist) truth (that we obtain on a lagged basis).  We readily see that our pseudo weighted WLR estimator generally does a better job of estimating the truth.

Figure \ref{fig:annualrev} shows the distribution of \emph{annual revisions} of the level estimates based on LR and WLR methods, respectively, over the set of MSAs in California. The annual revision, ${rev}_{d,12}$, is defined as the difference between the respective estimate, $\hat{Y}_{d,12}$, and ``true" population level ${Y}_{d,12}$ that becomes available after the fact, at the 12th month after the benchmark month:
\begin{equation*}
{rev}_{d,12}=\hat{Y}_{d,12}-{Y}_{d,12}.
\end{equation*}
Again, the WLR estimator demonstrates better fit performance than does the LR estimator in that the distribution of revision magnitudes is more compact.


To compute variance $v_{d,\tau}^{WLR}$ of the WLR estimate of relative change ${R}_{d,\tau}$, we extract $10$ draws from the posterior distribution of the fitted pseudo weights, estimate sampling variance for each draw of the pseudo-weights and then use a multiple imputation procedure described in a Appendix~\ref{a:mu} to compute the total variance of ${R}_{d,\tau}$ in a manner that accounts for the uncertainty in the estimation of weights. Coefficients of variations, $cv_{d,\tau}=\sqrt{v_{d,\tau}^{WLR}}/{R}_{d,\tau},$ are presented in Figure \ref{fig:cv_scatter}, where they are plotted against the employment level of respective domains. It can be observed that variances tend to be smaller in larger domains, as is expected.

\begin{figure}
\includegraphics[width = 0.95\textwidth]{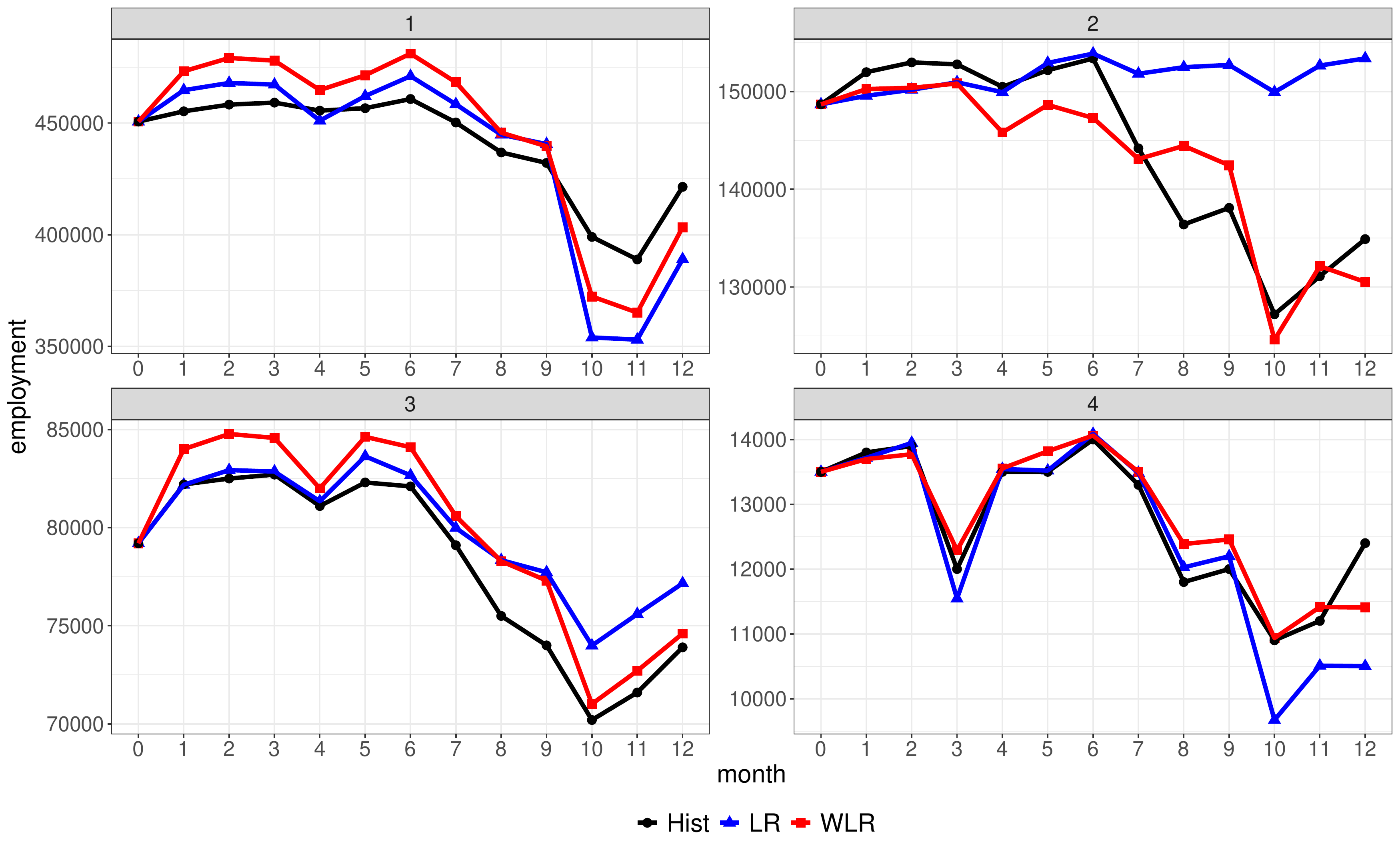}
\caption{Examples of a 12-month CES estimation cycle for select MSAs in California, series starting from September 2019 true levels. The black solid line corresponds to the "true" monthly levels ("Hist", obtained after the fact from historical series), the blue line with triangles shows estimates based on unweighted monthly link relatives ("LR"), and the red line with squares shows estimates based on the weighted link relatives ("WLR").}
  \label{fig:annualcycle_msa}
\end{figure}

\begin{figure}
\includegraphics[width = 0.95\textwidth]{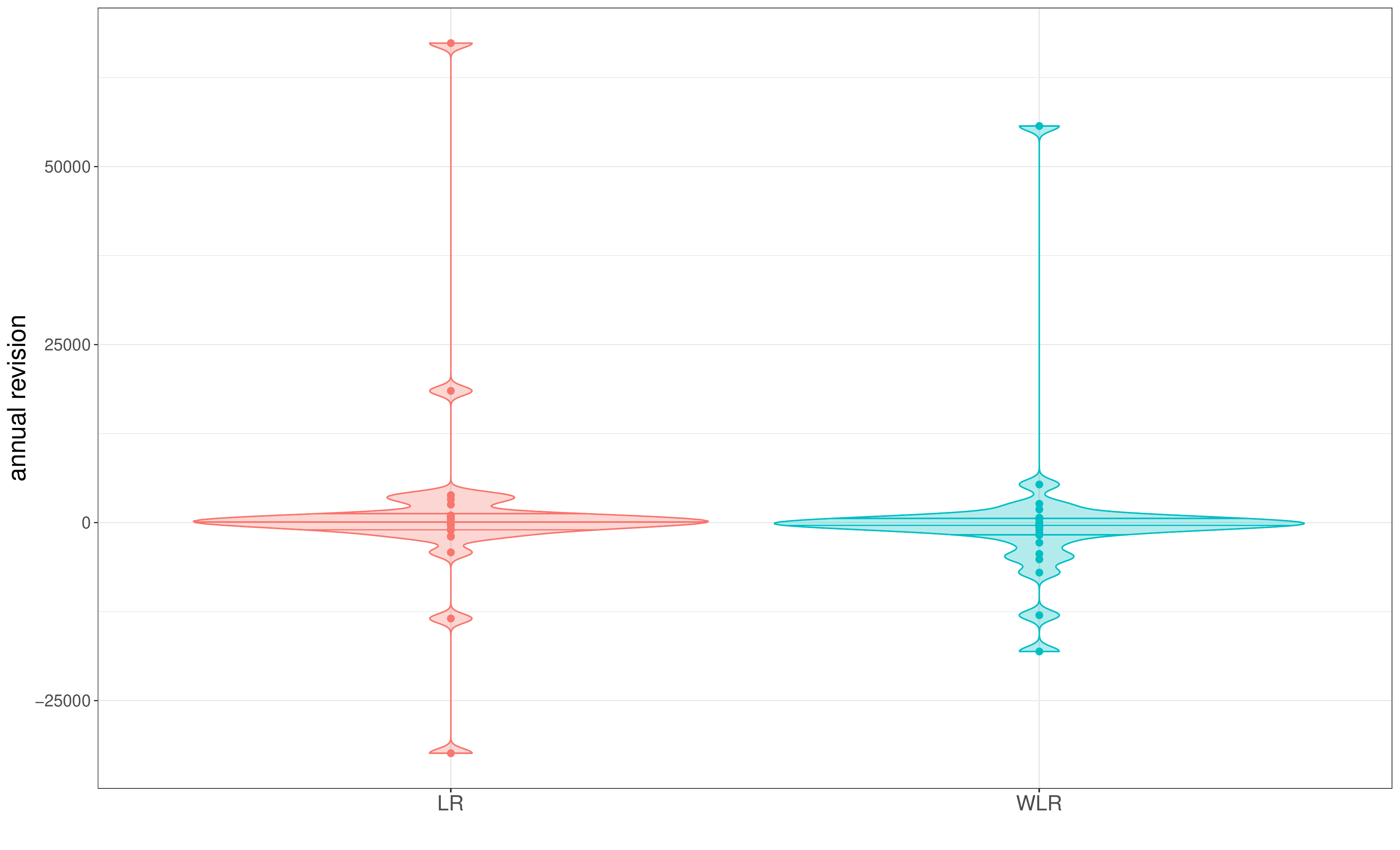}
\caption{Distribution of annual revisions for MSAs in California. "Annual revisions" are differences between respective level estimates (LR or WLR based) and the true historical levels at the 12th month of the estimation cycle.}
\label{fig:annualrev}
\end{figure}

\begin{figure}
\includegraphics[width = 0.95\textwidth]{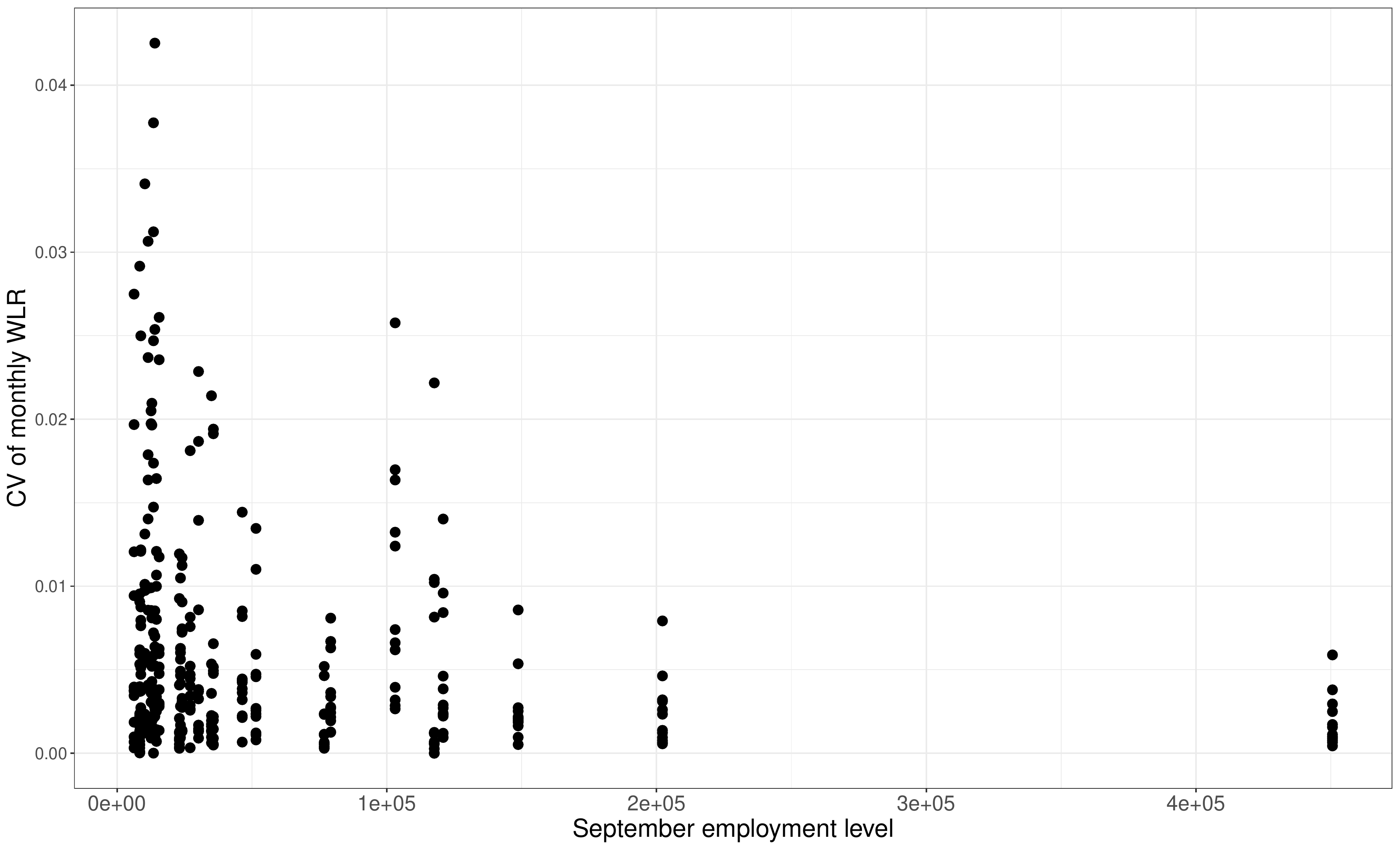}
\caption{Coefficients of variation (CV) of monthly WLR estimates for 12 months of estimation versus respective domains employment levels at the start of the estimation period in September 2019, for California MSAs.}
\label{fig:cv_scatter}
\end{figure}

\section{Discussion}\label{sec:discussion}
We introduced a novel approach that derived an exact relationship between the sample propensity score, $\pi_{zi}$, on the one hand, and the reference and convenience samples conditional inclusion probabilities, $\pi_{ri}$, and $\pi_{ci}$, on the other hand for an observed pooled sample.  Our expression is valid for any size of the overlap between the reference and convenience samples. It allows us to specify a likelihood directly for the sample using $\pi_{zi}$ and our specification of a Bayesian hierarchical probability model to simultaneously estimate all of them.

\appendix

\section{Estimation of Inclusion Probabilities Under Symmetric Two-arm Sampling} \label{sec:symmetric}

Our main method derives an expression connecting $(\pi_{zi},\pi_{ci},\pi_{ri})$ on the observed sample from first principles using the survey sampling literature.  We proceed on an alternative path that also connects these quantities based on the economics literature.  We will see in the sequel that this alternate path produces the same estimator, though they are derived from completely different approaches.

\citet{lancaster1996case} provide a modeling formulation for estimation of the conditional sample inclusion probabilities for case observations of interest under a two-arm experimental design with one-arm consisting of cases and the other consisting of an unknown collection of both case observations and control observations.  Under an observed sample from each arm, they assume a $2-$step sample observation process where the first step is a Bernoulli draw for the observed sub-sample indicator into either a case sample arm or a mixed case and control sample arm, given the observed sample.  The second step consists of the realization or appearance of units from the selected arm in the first stage.  The process parameterizes an exact likelihood for the distribution for predictors, $\mathbf{x}$, conditioned on the sub-population of cases in that sampling arm and a marginal population distribution for $\mathbf{x}$ in the mixed arm.  Using the distribution for $\mathbf{x}$ allows a clever and simple specification of the marginal distribution for $\mathbf{x}$ since they don't know the mix of cases and controls in the second arm.  The conditional distribution in the case sampling arm is a function of the case sample conditional inclusion probability (by Bayes rule) parameterized by regression coefficients.   This approach has the virtue of simultaneous estimation of conditional propensity scores and the conditional inclusion probabilities for cases.

We proceed to specialize and extend their $2-$step sample observation process and use of conditional distributions for $\mathbf{x}$ to our set-up of reference and control sampling arms and will specify a likelihood in each arm based on the sub-population of units linked to each type of sample.   

Let $z_{i} \in \{0,1\}$ be the same binary inclusion indicator of selection into the convenience sample for unit $i \in (1,\ldots,n)$ used in the previous section.  When $z_{i} = 0$ unit $i$ is drawn from the reference sample.   We suppose the observed two-arm sample (with convenience and reference sample arms) arises from a Bernoulli draw into either arm with probability $P(z_{i} = 1) = P(i \in S_{c}\mid i \in S)$ and subsequently specify a conditional sub-population distribution for $\mathbf{x}_{i}$ whose form depends on the outcome of the Bernoulli draw for each unit, $i \in (1,\ldots,n)$.  In particular, $p(\mathbf{x}_{i} \mid i \in S_{c}) = \pi_{c}(\mathbf{x}_{i}\mid\bm{\beta}_{c}) \times f(\mathbf{x}_{i}) / P(i \in S_{c} \mid i \in S, i \in U_{c})$ for the convenience sample by Bayes rule where we recall that $\pi_{c}(\mathbf{x}_{i}\mid\bm{\beta}_{c}) = P(i\in S_{c} | \mathbf{x}_{i}, \bm{\beta}_{c})$.  We drop the conditioning on $U_{c}$ and $U_{r}$ in the sequel where the context is obvious for readability.  We have included regression parameters $\bm{\beta}_{c}$ that parameterizes a model for unknown $\pi_{c}(\mathbf{x}_{i}\mid\bm{\beta}_{c})$ that we wish to estimate. By a symmetric process for the reference sample we have, $p(\mathbf{x}_{i} \mid i \in S_{r}) = \pi_{r}(\mathbf{x}_{i}\mid\bm{\beta}_{r}) \times f(\mathbf{x}_{i}) / P(i \in S_{r} \mid i\in S)$.  

We note that \emph{both} specifications for conditional distributions for $\mathbf{x}_{i}$ in each sampling arm use the same marginal distribution, $f(\mathbf{x}_{i})$, because both samples are drawn from the same underlying population.

Let $q = P(i \in S_{c}) = \int \pi_{c}(\mathbf{x}_{i}\mid\bm{\beta}_{c})f(x)dx$ and $t = P(i \in S_{r}) = \int \pi_{r}(\mathbf{x}_{i}\mid\bm{\beta}_{r})f(x)dx$ denote the unknown marginal probabilities used above to specify the conditional distributions for $\mathbf{x}_{i}$ in each sampling arm.  The marginal (over predictors, $\mathbf{x}$) probability for a unit to be selected into a sampling arm is denoted by $h = P(z_{i} = 1) = P(i \in S_{c}\mid i\in S)$. All of $(h,q,t,\bm{\beta}_{c},\bm{\beta}_{r})$ are unknown parameters that will receive prior distributions to be updated by the data.

The conditional distributions for $\mathbf{x}_{i}$ in each arm and the marginal probabilities for selection into each arm parameterize the likelihood for $(h,q,t,\bm{\beta}_{c},\bm{\beta}_{r})$,
\begin{equation}
\begin{split}
L\left(h,q,t,\bm{\beta}_{c},\bm{\beta}_{r}\mid \mathbf{z},X\right) \times \mathop{\prod}_{i=1}^{n}f(\mathbf{x}_{i}) =  \mathop{\prod}_{i=1}^{n}\left(h\pi_{c}(\mathbf{x}_{i}\mid\bm{\beta}_{c})/q\right)^{z_{i}} \times \left(((1-h)\pi_{r}(\mathbf{x}_{i}\mid\bm{\beta}_{r})/t\right)^{1-z_{i}} \\ \times f(\mathbf{x}_{i}),
\end{split}
\label{eq:rawlike}
\end{equation}
where we factor out the $f(\mathbf{x}_{i})$ on both sides and subsequently propose to drop these marginal distributions for the covariates because we don't believe they are random.  We use $f(\mathbf{x}_{i})$ as a computation device to allow us to specify a likelihood with conditional distributions of $\mathbf{x}_{i}$ in each sampling arm.

We proceed to reparameterize Equation~\ref{eq:rawlike} by extending an approach of \citet{johnson2021} from the case-control setting to our two-arm sampling set-up.   We construct the following transformed parameters:
\begin{align}
    \begin{split}
        \psi &= q(1-h)/th\\
        \pi_{zi}  &= \ \pi_c(\mathbf{x}_{i}\mid\bm{\beta}_c)/(\pi_c(\mathbf{x}_{i}\mid\bm{\beta}_c) + \psi \pi_r(\mathbf{x}_{i}\mid\bm{\beta}_r))\\
        1-\tilde{q}_{i} & = (1-h)\pi_r(\mathbf{x}_{i}\mid\bm{\beta}_r)/t.
    \end{split}
    \label{eq:transparms}
\end{align}

Using the transformations of Equation~\ref{eq:transparms} allows us to reparameterize the conditional likelihood (after dropping $f(\mathbf{x}_{i})$ in Equation~\ref{eq:rawlike}) to,
\begin{equation}
 L\left(h,q,t,\bm{\beta}_{c},\bm{\beta}_{r}\mid \mathbf{z},X\right) = \mathop{\prod}_{i=1}^{n} \pi_{zi}^{z_{i}}(1-\pi_{zi})^{1-z_{i}} \times \frac{1-\tilde{q}_{i}}{1-\pi_{zi}},
\end{equation}
which is a product of a Bernoulli distributed term and a ratio of transformed parameters.

We examine the non-Bernoulli likelihood contribution, $\mathop{\prod}_{i=1}^{n}\frac{1-\tilde{q}_{i}}{1-\pi_{zi}}$ asymptotically as the reference sample size, $n_r \uparrow \infty$, under a fixed convenience sample size, $n_c$.   We present a theoretical result in the following section that demonstrates the log of this ratio contribution to the likelihood limits to $0$, asymptotically for $n_r$ and $n_r/n_c$ both sufficiently large to allow the ignoring or dropping of this term.  


We may construct a model using the Bernoulli likelihood,
\begin{equation}
 L\left(h,q,t,\bm{\beta}_{c},\bm{\beta}_{r}\mid \mathbf{z},X\right) = \mathop{\prod}_{i=1}^{n} \pi_{zi}^{z_{i}}(1-\pi_{zi})^{1-z_{i}},
\end{equation}

with associated propensity,
\begin{equation}
    \pi_{zi}  = \ \pi_c(\mathbf{x}_{i})/(\pi_c(\mathbf{x}_{i}) + \psi \pi_r(\mathbf{x}_{i}))\
\end{equation}
where we have suppressed $(\bm{\beta}_{c},\bm{\beta}_{r})$ to facilitate comparison with $\pi_{z}(\mathbf{x}_{i}) = \pi_c(\mathbf{x}_{i})/(\pi_c(\mathbf{x}_{i}) + \pi_r(\mathbf{x}_{i}))$ from our main method.  

We see that the propensity formulation here and under our main method are nearly identical, up to an inclusion of $\psi$ in the denominator under the Symmetric two-arm approach, despite both being derived from different principles.  We prove in the next section that under the above definitions for $(h,q,t)$ that $\psi$ must equal $1$, which may be seen intuitively by noting that for a sample size, $n$, sufficiently large we may plug in modal quantities, $(h = n_{c}/n, q = n_{c}/N, t = n_{r}/N)$, for those marginal probabilities which produces $\psi = 1$.  Our proof for $\psi=1$ is true, however, for any sample size. The implication is that we have arrived at the very same result for the likelihood and conditional propensity, $\pi_{zi}$, as developed under our main approach.  The reverse implication is that the classical setting for \cite{lancaster1996case} could be estimated more efficiently by setting $\psi = 1$. Investigating whether this simplification for $\psi = 1$ holds for more complex applications such as $k$-indexed simultaneous outcomes with uniques values for $\psi_k$ \citep{johnson2021} is a subject for future work.



\subsection{Proof that $\log\left(\frac{1-\tilde{q}_{i}}{1-\pi_{zi}}\right)$  asymptotically contracts on $0$.}
\label{a:proof}

This proof performs an extension to the corresponding proof for stratified use-availability designs found in \cite{johnson2021} to our case of a the two-arm sampling design under an arbitrary sampling design.

\begin{proposition}\label{p:x}
	The pseudo log likelihood contribution $\sum_{i=1}^{n}\log\left(\frac{1-\tilde{q}_{i}}{1-p_{zi}}\right)$ contracts on  $0$ as the reference sample grows, $n_{r} \uparrow \infty$ and $h = n_c/n \downarrow 0$.
\end{proposition}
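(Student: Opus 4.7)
My plan is to first collapse the ratio into a simple closed form via algebra, and then analyze its asymptotics in the prescribed regime. Substituting $1-\pi_{zi}=\psi\pi_r(\mathbf{x}_i)/(\pi_c(\mathbf{x}_i)+\psi\pi_r(\mathbf{x}_i))$ and $1-\tilde q_i=(1-h)\pi_r(\mathbf{x}_i)/t$ into the ratio, the $\pi_r(\mathbf{x}_i)$ factors cancel and $\psi=q(1-h)/(th)$ simplifies the constants, yielding
\begin{equation*}
\frac{1-\tilde q_i}{1-\pi_{zi}} \;=\; \frac{h\,\pi_c(\mathbf{x}_i)}{q} \,+\, \frac{(1-h)\,\pi_r(\mathbf{x}_i)}{t} \;=:\; \rho_i.
\end{equation*}
I recognise $\rho_i$ as the density ratio $p_{\text{pooled}}(\mathbf{x}_i)/f(\mathbf{x}_i)$ of the pooled-sample covariate law to the population covariate law, using $p(\mathbf{x}\mid S_c)=\pi_c(\mathbf{x})f(\mathbf{x})/q$ and $p(\mathbf{x}\mid S_r)=\pi_r(\mathbf{x})f(\mathbf{x})/t$ from the specification in Section~\ref{sec:symmetric}. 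A direct calculation confirms $E_f[\rho_i]=h+(1-h)=1$, so $\rho_i$ is a mean-one statistic under the population measure.

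The next step is to write $\rho_i=1+\epsilon_i$ with $\epsilon_i=h(\pi_c(\mathbf{x}_i)/q-1)+(1-h)(\pi_r(\mathbf{x}_i)/t-1)$ and expand $\log\rho_i=\epsilon_i-\tfrac{1}{2}\epsilon_i^{2}+O(\epsilon_i^{3})$. In the regime $n_r\uparrow\infty$ and $h=n_c/n\downarrow0$, the convenience-sample piece of $\epsilon_i$ is uniformly $O(h)$ and contributes $n\cdot O(h)=O(n_c)$ to the sum, which is bounded. The dominant piece is $(1-h)(\pi_r(\mathbf{x}_i)/t-1)$; summing over the $n_r$ reference observations I would argue that the empirical average $n_r^{-1}\sum_{i\in S_r}(\pi_r(\mathbf{x}_i)/t-1)$ contracts on zero by specializing the stratified use-availability LLN argument of \cite{johnson2021} to the sampling designs satisfying Conditions \ref{itm:latentrandom}--\ref{itm:positivity}, treating the sum as a design-based Horvitz--Thompson style average whose target $E_f[\pi_r/t-1]=0$.

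The main obstacle is the second-order control: since $\mathbf{x}_i$ from the reference arm has law $\pi_r f/t$ rather than $f$, the naive population centering induces a residual of size $\mathrm{Var}_f(\pi_r)/t^{2}$ per observation, which if unchecked would make the sum grow with $n_r$. I expect the careful accounting to come from combining the $-\tfrac{1}{2}\epsilon_i^{2}$ Taylor term with a change-of-measure adjustment that re-expresses the reference-arm empirical average in terms of the population expectation, thereby cancelling the leading $\mathrm{Var}_f(\pi_r)$ contribution and leaving only an $O(n_r^{-1/2})$ fluctuation. The residual $O(\epsilon_i^{3})$ pieces are then uniformly bounded by Condition \ref{itm:positivity}, which guarantees $\pi_r$ and $\pi_c$ are bounded away from $0$, so they vanish at the same rate. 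Assembling these pieces gives $\sum_{i=1}^{n}\log(\rho_i)\to 0$ in probability under the stated limit, completing the extension of \cite{johnson2021} to the arbitrary two-arm sampling design considered here.
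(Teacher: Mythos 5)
Your opening algebra is correct and agrees with the paper's first step: $\frac{1-\tilde q_i}{1-\pi_{zi}} = \frac{h\,\pi_{ci}}{q} + \frac{(1-h)\,\pi_{ri}}{t}$, which is exactly the paper's $(1-h)\bigl(\frac{\pi_{ci}}{\psi_c}+\frac{\pi_{ri}}{t}\bigr)$ with $\psi_c = q(1-h)/h$. The gap is in the choice of expansion parameter. The paper Taylor-expands $\log\bigl(\frac{\pi_{ci}}{\psi_c}+\frac{\pi_{ri}}{t}\bigr)$ about $\frac{\pi_{ci}}{\psi_c}=0$, a quantity that genuinely vanishes uniformly in $i$ because $\psi_c = q(1-h)/h \uparrow \infty$ as $h\downarrow 0$; the resulting linear terms sum to $\frac{t}{\psi_c}\sum_i \frac{\pi_{ci}}{\pi_{ri}} \to n_c\frac{t}{q}\cdot\frac{q}{t}=n_c$, which cancels $n\log(1-h)\to -n_c$ exactly. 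You instead expand $\log(1+\epsilon_i)$ in $\epsilon_i = h(\pi_{ci}/q-1)+(1-h)(\pi_{ri}/t-1)$, but the reference-arm component $(1-h)(\pi_{ri}/t-1)$ is $O(1)$ per unit and does not shrink in the stated limit: under the reference-arm covariate law $\pi_r f/t$ its expectation is $\mathrm{Var}_f(\pi_r)/t^2>0$ for any non-constant design, so \emph{every} order of your expansion contributes at rate $n_r$. The cancellation you posit between this first-order bias and the $-\tfrac12\epsilon_i^2$ term is not generic --- it would require $E_{\pi_r f/t}[\pi_r/t-1]=\tfrac12 E_{\pi_r f/t}[(\pi_r/t-1)^2]$ plus control of all higher orders --- and Condition~\ref{itm:positivity} only makes $\epsilon_i^3$ bounded, not vanishing, so the cubic remainder also accumulates at rate $n_r$. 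The argument as outlined therefore does not close.

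Two further points. Saying the convenience piece ``contributes $O(n_c)$, which is bounded'' is not enough: the proposition asserts the sum contracts on $0$, so that piece must cancel $n\log(1-h)\to -n_c$ exactly, and your mean-one centering $E_f[\rho_i]=1$ obscures precisely this cancellation. Relatedly, the centering is taken under the population law $f$, whereas the observed covariates are drawn from the tilted laws $\pi_c f/q$ and $\pi_r f/t$, so $E_f[\rho_i]=1$ does not control the empirical sums you need. The repair is to treat only the convenience contribution $\pi_{ci}/\psi_c$ as the small perturbation and leave the reference contribution inside the logarithm, as the paper does.
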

\begin{proof}
    We begin with some simple algebra to state the likelihood term with marginal probabilities, $(h,q,t)$,
    \begin{align}
    \begin{split}
     \mathop{\prod}_{i=1}^{n}\frac{1-\tilde{q}_{i}}{1-\pi_{zi}}
     &= \mathop{\prod}_{i=1}^{n}\frac{(1-h)\pi_{ri}}{t} \times \frac{\pi_{ci} + \psi\pi_{ri}}{\psi\pi_{ri}}\\
     &= \mathop{\prod}_{i=1}^{n}(1-h) \times \left(\frac{\pi_{ci}}{\psi t} + \frac{\pi_{ri}}{t}\right)\\
     &= \mathop{\prod}_{i=1}^{n}(1-h) \times \left(\frac{\pi_{ci}}{\psi_{c}} + \frac{\pi_{ri}}{t}\right)
     \end{split}
     \label{eq:mult}
    \end{align}
    where for readability we simplify the expression of $\pi_{c}(\mathbf{x}_{i}\mid\bm{\beta}_{c})$ with the short-hand, $\pi_{ci}$, and the same for $\pi_{ri}$.  We plug in for $\psi = \frac{q(1-h)}{h} \times \frac{1}{t} = \frac{\psi_{c}}{t}$ into the last equation in the series where $\psi_{c}$ is composed of quantities solely related to the convenience sample.
    
    We take the logarithm of the last equation of Equation~\ref{eq:mult},
    \begin{equation}
       \log\left(\mathop{\prod}_{i=1}^{n}(1-h) \times \left(\frac{\pi_{ci}}{\psi_{c}} + \frac{\pi_{ri}}{t}\right)\right) =  n\log(1-h) + \mathop{\sum}_{i=1}^{n}\log\left(\frac{\pi_{ci}}{\psi_{c}} + \frac{\pi_{ri}}{t}\right).
    \end{equation}
    We proceed to take a Taylor series expansion of $\log\left(\frac{\pi_{ci}}{\psi_{c}} + \frac{\pi_{ri}}{t}\right)$ about $\frac{\pi_{ci}}{\psi_{c}} = 0$ and use the first term, which we may do since $\frac{\pi_{ci}}{\psi_{c}}$ grows vanishingly small in the limit as $n\uparrow\infty$ (since $h \downarrow 0$ such that $\psi_{c}\uparrow\infty$).  This produces,
    
    \begin{align}
      n\log(1-h) + \mathop{\sum}_{i=1}^{n}\log\left(\frac{\pi_{ci}}{\psi_{c}} + \frac{\pi_{ri}}{t}\right) &= n\log(1-h) +\mathop{\sum}_{i=1}^{n}\frac{\pi_{ci}}{\psi_{c}}\frac{t}{\pi_{ri}}  \\
      &=n\log(1-h) +\frac{t}{\psi_{c}}\mathop{\sum}_{i=1}^{n}\frac{\pi_{ci}}{\pi_{ri}}\\
      &=n\log(1-h) + \frac{t}{\psi_{c}} \times \overline{\left[\frac{\pi_{c}}{\pi_{r}}\right]}\\
      &=n\log(n-n_c) - n\log n +\frac{n_c}{n-n_{c}}\times \frac{nt}{q}\overline{\left[\frac{\pi_{c}}{\pi_{r}}\right]},
    \end{align}
    where we have plugged in $h = n_c/n$ for $n$ sufficiently large and $\overline{\left[\frac{\pi_{c}}{\pi_{r}}\right]}$ represents the mean of the ratio,  $\frac{1}{n}\mathop{\sum}_{i=1}^{n}\frac{\pi_{ci}}{\pi_{ri}}$.
    
    We next evaluate the limit of the above expression as $n_r \uparrow \infty$ under a constant value for $n_c$,
    \begin{subequations}
    \begin{align}
        \mathop{\lim}_{n_{r}\uparrow\infty} n\log(n-n_c) - n\log n +\frac{n_c}{n-n_{c}}\times \frac{nt}{q}\overline{\left[\frac{\pi_{c}}{\pi_{r}}\right]} &= -n_{c} + n_{c}\frac{t}{q}\overline{\left[\frac{\pi_{c}}{\pi_{r}}\right]}\\
         &= -n_{c} + n_{c}\frac{t}{q}\frac{\overline{\pi_c}}{\overline{\pi_r}}\label{eq:ratiosum}\\
         & = -n_{c} + n_{c}\frac{t}{q}\frac{q}{t}\label{eq:LLN}\\
         & = -n_{c} + n_{c}.
    \end{align}
    \end{subequations}
    
    In Equation~\ref{eq:ratiosum}, the mean of the ratios contracts on the ratio of the means as $n\uparrow\infty$ because $\pi_{ci}$ limits to $0$ as $n$ increases since $n_c$ is fixed such that the $\lim_{n\uparrow\infty}\frac{\pi_{cn}}{\pi_{rn}}$ exists and is finite.  Also required is that $\pi_{rn}$ be non-decreasing as $n$ increases, which we may achieve through reordering the terms.  Next, we apply the Law of Large Numbers for the convergence of the sample mean under the assumption of absolutely bounded values in expectation for $q$ and $t$.
    
 \end{proof}   
 
\subsection{Proof that $\psi = 1$ under the Symmetric Two-arm Method of Section~\ref{sec:symmetric}}
\label{b:proof}
\begin{proposition}\label{p:psi}
	Let marginal (over $\mathbf{x}_{i}$) probabilities be defined as,  $h = P(i \in S_{c}\mid i \in S),~ q = P(i \in S_{c} \mid i \in U),~ t = P(i \in S_{r} \mid i \in U)$ and further define $\psi = q(1-h)/th$.  Let $\mathcal{S}$ denote the space of all two-arm samples, $(S_{c},S_{r})$ of size $(n_{c},n_{r})$, respectively. Recall that $U$ is the set of two stacked populations $\{U^{0}, U^{0}\}$ corresponding to each arm.
	\noindent Then if we assume strictly positive conditional inclusion probabilities for all units and that the convenience sample arises from an underlying latent random sampling design then, 
	\vspace{0.5cm}
	\newline $\psi = 1$ a.s. $P_{\pi}$, where $P_{\pi}$ is the unknown true joint generating distribution for all $(S_{c},S_{r}) \in U \subset \mathcal{S}$, given $U$.
\end{proposition}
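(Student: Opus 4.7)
The plan is to exhibit $\psi=1$ as a deterministic algebraic identity that follows from the stacked-population construction of Section~\ref{sec:bayesrule}, rather than via any limiting argument. The key observation is that although $S_c$ and $S_r$ may overlap as subsets of the target frame $U^0$, under the stacking $U=U^0+U^0$ the convenience sample lives in one copy and the reference sample in the other, so $S_c$ and $S_r$ are \emph{disjoint} subsets of $S \subset U$. This disjointness, which is already visible in Figure~\ref{fig:scheme} and is the device that made Equation~\ref{eq:S} a simple sum, is exactly what forces $\psi=1$.

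First, I would marginalize the conditional identities in Equations~\ref{eq:S_c}, \ref{eq:S_r}, and~\ref{eq:S} over $\mathbf{x}$, using the common (conditional on $i\in U$) distribution $F$ of the design variables guaranteed by Assumption~\ref{itm:design variables}. This yields the closed-form expressions
\begin{equation}
q = \int P(i\in S_c\mid i\in U,\mathbf{x})\,F(d\mathbf{x}), \qquad t = \int P(i\in S_r\mid i\in U,\mathbf{x})\,F(d\mathbf{x}),
\end{equation}
and, because of the disjointness noted above,
\begin{equation}
P(i\in S\mid i\in U) = q + t,
\end{equation}
a quantity that is strictly positive by Assumption~\ref{itm:positivity}.

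Next, I would apply the definition of conditional probability to $h$: since $S_c\subseteq S$,
\begin{equation}
h = P(i\in S_c\mid i\in S) = \frac{P(i\in S_c\mid i\in U)}{P(i\in S\mid i\in U)} = \frac{q}{q+t},
\end{equation}
so that $1-h = t/(q+t)$. Substituting into the definition of $\psi$ gives
\begin{equation}
\psi = \frac{q(1-h)}{t\,h} = \frac{q\cdot t/(q+t)}{t\cdot q/(q+t)} = 1,
\end{equation}
which holds identically (and hence $P_\pi$-a.s.\ trivially). The ``a.s.'' qualifier is vacuous here because $h,q,t$ are non-random functionals of the sampling design under $P_\pi$, not sample statistics; the proposition is really an identity for the marginal design probabilities.

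The ``hard part'' is essentially conceptual rather than technical: one has to recognize that Assumption~\ref{itm:independence} together with the stacking of $U^0$ turns the potentially overlapping event $\{i\in S_c\}\cap\{i\in S_r\}$ into two separable events on the two copies, so that probabilities add. Once that is in hand, the algebra is one line. A secondary subtlety is to verify that the marginalizing measure for $\mathbf{x}$ used in defining $q$ and $t$ is the same (namely the common covariate law on $U^0$); otherwise one might worry the numerator and denominator of the cancellation in $\psi$ are averaged against different measures. Assumption~\ref{itm:design variables} resolves this. No limit, no sample-size asymptotics, and no appeal to Proposition~\ref{p:x} is needed for this result, which is why the Symmetric two-arm formulation in Appendix~\ref{sec:symmetric} collapses to the main-text formula for $\pi_{zi}$ in Equation~\ref{eq:U_c=U_r}.
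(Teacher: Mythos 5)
Your proof is correct and takes essentially the same approach as the paper's: both are deterministic one-line algebraic identities obtained from the definition of conditional probability together with the fact that $S_{c}$ and $S_{r}$ are disjoint subsets of $S$ in the stacked construction (the paper phrases the cancellation as $\psi\,P(i\in S\mid i\in S_{c})=P(i\in S\mid i\in S_{r})$ with both conditional probabilities equal to one, whereas you compute $h=q/(q+t)$ and substitute, which is the same content). Your explicit remarks that $P(i\in S\mid i\in U)=q+t$ by disjointness and that no asymptotics are needed match the paper's intent exactly.
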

\begin{proof}
For any $S = S_{c}~+~S_{r} \in U \subset \mathcal{S}$, 

\begin{subequations}
\begin{align}
    \psi &= \frac{q(1-h)}{th}\\
    \psi\frac{h}{q} &= \frac{1-h}{t}\\
    \psi\frac{P(i \in S_{c}\mid i \in S)}{P(i \in S_{c} \mid i \in U)} &= \frac{P(i \in S_{r}\mid i \in S)}{P(i \in S_{r} \mid i \in U)}\label{eq:base}\\
    \psi\frac{P(i \in S_{c}\mid i \in S)P(i\in S \mid i \in U)}{P(i \in S_{c} \mid i \in U)} &= \frac{P(i \in S_{r}\mid i \in S)P(i\in S \mid i \in U)}{P(i \in S_{r} \mid i \in U)}\label{eq:multiply}\\
    \psi P(i \in S \mid i \in S_{c}) &= P(i \in S \mid i \in S_{r})\\
    \psi &= 1,
\end{align}
\end{subequations}

\noindent where in Equation~\ref{eq:multiply} we multiply both left- and right-hand side of Equation~\ref{eq:base} by $P(i \in S \mid i \in U) > 0$.

\end{proof}
\emph{Remark:} 
When the reference sample is the population $S_{r} = U^{0}$, the proof holds without modification.


\section{Stan Model Estimation Script}\label{a:script}

We present the Stan estimation script \citep{gelman2015a} for our two-arm exact likelihood method, below.  The script is built around Stan's \texttt{partial\_sum} function to allow within chain parallelization for computational scalability.

\begin{lstlisting}
functions{
  
  vector build_b_spline(vector t, vector ext_knots, int ind, int order);
  matrix build_mux(int N, int start, int end, int K_sp, matrix X, matrix[] G, 
                   matrix beta_x, matrix[] beta_w);
  row_vector build_muxi(int K_sp, int num_basis, row_vector x_i, vector[] g_i, 
                        matrix beta_x, matrix[] beta_w);
  vector build_b_spline(vector t, vector ext_knots, int ind, int order) {
    // INPUTS:
    //    t:          the points at which the b_spline is calculated
    //    ext_knots:  the set of extended knots
    //    ind:        the index of the b_spline
    //    order:      the order of the b-spline
    vector[num_elements(t)] b_spline;
    vector[num_elements(t)] w1 = rep_vector(0, num_elements(t));
    vector[num_elements(t)] w2 = rep_vector(0, num_elements(t));
    if (order==1)
      for (i in 1:num_elements(t)) // B-splines of order 1 are piece-wise constant
        b_spline[i] = (ext_knots[ind] <= t[i]) && (t[i] < ext_knots[ind+1]);
    else {
      if (ext_knots[ind] != ext_knots[ind+order-1])
        w1 = (to_vector(t) - rep_vector(ext_knots[ind], num_elements(t))) /
             (ext_knots[ind+order-1] - ext_knots[ind]);
      if (ext_knots[ind+1] != ext_knots[ind+order])
        w2 = 1 - (to_vector(t) - rep_vector(ext_knots[ind+1], num_elements(t))) /
                 (ext_knots[ind+order] - ext_knots[ind+1]);
      // Calculating the B-spline recursively as linear interpolation of two lower-order splines
      b_spline = w1 .* build_b_spline(t, ext_knots, ind, order-1) +
                 w2 .* build_b_spline(t, ext_knots, ind+1, order-1);
    }
    return b_spline;
  }
  
  matrix build_mux(int N, int start, int end, int K_sp, matrix X, matrix[] G, matrix beta_x, matrix[] beta_w){
    matrix[N,2] mu_x;
    for( arm in 1:2 )
    {
      mu_x[1:N,arm]     = X[start:end,] * to_vector(beta_x[,arm]); /* N x l for each arm */
      // spline term
      for( k in 1:K_sp )
      {
        mu_x[1:N,arm]  += to_vector(beta_w[arm][,k]' * G[k][,start:end]); /* N x 1 */
      } /* end loop k over K predictors */
      
    }/* end loop arm over convenience and reference sample arms */
    
    return mu_x;
  }
  
  row_vector build_muxi(int K_sp, int num_basis,
     row_vector x_i, vector[] g_i, matrix beta_x, matrix[] beta_w){
       
     row_vector[2] mu_xi;
    
     for( arm in 1:2 )
     {
       mu_xi[arm]     = dot_product(x_i,beta_x[,arm]); /* scalar */
       // spline term
       for( k in 1:K_sp )
       {
         mu_xi[arm]   += dot_product(beta_w[arm][1:num_basis,k], g_i[k][1:num_basis]); /* scalar */
       } /* end loop k over K predictors */
      
     }/* end loop arm over convenience and reference sample arms */
    
    return mu_xi;
    
  } /* end function build_mu */
  
  real partial_sum(int[] s, 
                   int start, int end, real[] logit_pw, int K_sp, int n_c, int n,
                   int num_basis, matrix X, matrix[] G, matrix beta_x, matrix[] beta_w, 
                   real phi_w) {
     int N = end - start + 1;
     matrix[N,2] mu_x;
     matrix[N,2] p;
     vector[N] p_tilde; // this pseudoprobability must be in [0,1]
     real fred              = 0;
     
     // memo: slicing on all of mu_x[li,arm], p[li,arm], p_tilde[li] for li in 1:(end-start+1)
     //       where p_tilde is the mean vector for binary data vector, s, and mu_x[,2]
     //       is the mean vector for data vector logit_pw.
     //       Also slicing data vectors s and logit_pw in their respective
     //       log-likelihood contributions.
     
     mu_x             = build_mux(N, start, end, K_sp, X, G, beta_x, beta_w);
               
     p                = inv_logit( mu_x );
    
    
    // 1. bernoulli likelihood contribution
    p_tilde         = p[1:N,1] ./ ( p[1:N,1] + p[1:N,2] );
    fred            += bernoulli_lpmf( s[1:N] | p_tilde );
    
    // 2. normal likelihood contribution
    // In non-threaded model, likelihood statement for n - n_c, logit_pw
    // logit_pw          ~ normal( mu_x[(n_c+1):n,2], phi_w ); 
    // slicing on n length vector mu_x[,2] 
    // subsetting portion of mu_x[,2] linked to logit_pw
    if( start > n_c  ) ## all units in this chunk increment likelihood contribution for logit_pw
    {
      fred       += normal_lpdf( logit_pw[start-n_c:end-n_c] | mu_x[1:N,2], phi_w );
    }else{ /* start <= n_c */
      if( end > n_c ) /* some units in chunk < n_c and some > n_c; only those > n_c increment likelihood */
      {
        fred                      += normal_lpdf( logit_pw[1:end-n_c] | mu_x[n_c-start+2:N,2], phi_w );
        
      } /* end if statement on whether n_c \in (start,end ) */
    } /* end if-else statement on whether add logit_pw likelihood contributions */
    
  
  return fred;
  }/* end function partial_sum() */

} /* end function block */

data{
    int<lower=1> n_c; // observed convenience (non-probability) sample size
	  int<lower=1> n_r; // observed reference (probability) sample size
	  int<lower=1> N; // estimate of population size underlying reference and convenience samples
	  int<lower=1> n; // total sample size, n = n_c + n_r
	  int<lower=1> num_cores;
	  int<lower=1> multiplier;
	  int<lower=1> K; // number of fixed effects
	  int<lower=0> K_sp; // number of predictors to model under a spline basis
	  int<lower=1> num_knots;
    int<lower=1> spline_degree;
    matrix[num_knots,K_sp] knots;
    real<lower=0> weights[n_r]; // sampling weights for n_r observed reference sample units
    matrix[n_c, K] X_c; //  *All* predictors - continuous and categorical -  for the convenience units
    matrix[n_r, K] X_r; // *All* predictors - continuous and categorical -  for the reference units
    matrix[n_c, K_sp] Xsp_c; //  *Continuous* predictors under a spline basis for convenience units
    matrix[n_r, K_sp] Xsp_r; // *Continuous* predictors under a spline basis for convenience units
    int<lower=1> n_df;
} /* end data block */

transformed data{
  // create indicator variable of membership in convenience or reference samples
  // indicator of observation membership in the convenience sample
  int grainsize                     = ( n / num_cores ) / multiplier;
  real logit_pw[n_r]                = logit(inv(weights));
  int<lower=0, upper = 1> s[n]      = to_array_1d( append_array(rep_array(1,n_c),rep_array(0,n_r)) ); 
  matrix[n,K] X                     = append_row( X_c,X_r );
  matrix[n,K_sp] X_sp               = append_row( Xsp_c,Xsp_r );
  /* formulate spline basis matrix, B */
  int num_basis = num_knots + spline_degree - 1; // total number of B-splines
  matrix[spline_degree + num_knots,K_sp] ext_knots_temp;
  matrix[2*spline_degree + num_knots,K_sp] ext_knots;
  matrix[num_basis,n] G[K_sp]; /* basis for model on p_c */
  for(k in 1:K_sp)
  {
     ext_knots_temp[,k] = append_row(rep_vector(knots[1,k], spline_degree), knots[,k]);
    // set of extended knots
     ext_knots[,k] = append_row(ext_knots_temp[,k], rep_vector(knots[num_knots,k], spline_degree));
     for (ind in 1:num_basis)
     {
        G[k][ind,] = to_row_vector(build_b_spline(X_sp[,k], ext_knots[,k], ind, (spline_degree + 1)));
     }
     G[k][num_knots + spline_degree - 1, n] = 1;
  }
     
} /* end transformed data block */

parameters {
  matrix<lower=0>[K,2] sigma2_betax; /* standard deviations of K x 2, beta_x */
                                    /* first column is convenience sample, "c", and second column is "r" */

  matrix[K,2] betaraw_x; /* fixed effects coefficients - first colum for p_c; second column for p_r  */
  // spline coefficients
  vector<lower=0>[2] sigma2_global; /* set this equal to 1 if having estimation difficulties */
  matrix<lower=0>[2,K_sp] sigma2_w;
  matrix[num_basis,K_sp] betaraw_w[2];  // vector of B-spline regression coefficients for each predictor, k
                                        // and 2 sample arms
  real<lower=0> phi2_w; /* scale parameter in model for -1og(weights) */                                     
} /* end parameters block */

transformed parameters {
  matrix[K,2] beta_x;
  matrix[num_basis,K_sp] beta_w[2];
  matrix<lower=0>[K,2] sigma_betax;
  vector<lower=0>[2] sigma_global; /* set this equal to 1 if having estimation difficulties */
  matrix<lower=0>[2,K_sp] sigma_w;
  real<lower=0> phi_w;
  
  sigma_betax       = sqrt( sigma2_betax );
  sigma_global      = sqrt( sigma2_global );
  sigma_w           = sqrt( sigma2_w );
  phi_w             = sqrt( phi2_w );
  
  // for scale parameters for interaction effects from those for main effects to which they link
  for( arm in 1:2 )
  {
     beta_x[,arm]   = betaraw_x[,arm] .* sigma_betax[,arm]; /* Non-central parameterization */
  }/* end loop arm over convenience and reference sample arms */
  
  // spline regression coefficients
  for(arm in 1:2)
  {
    for( k in 1:K_sp ) 
    {
      beta_w[arm][,k]   = cumulative_sum(betaraw_w[arm][,k]);
      beta_w[arm][,k]  *= sigma_w[arm,k] * sigma_global[arm];
    } /* end loop k over K predictors */
  }
  
} /* end transformed parameters block */

 model {
  to_vector(sigma2_betax) ~ gamma(1,1);
  to_vector(sigma2_w)     ~ gamma(1,1);
  sigma_global            ~ gamma(1,1);
  phi2_w                  ~ gamma(1,1);

  to_vector(betaraw_x)   ~ std_normal();
  for(arm in 1:2)
    to_vector(betaraw_w[arm])   ~ std_normal();
  
  /* Model likelihood for y, logit_pw */
 // Sum terms 1 to n in the likelihood 
 target            += reduce_sum(partial_sum, s, grainsize, 
                                              logit_pw, K_sp, n_c, n, num_basis, X, G, 
                                              beta_x, beta_w, phi_w); 

} /* end model block */  

generated quantities{
  matrix[n,2] p;
  matrix[n,2] mu_x;
  
  for( arm in 1:2 )
  {
    mu_x[,arm]     = X[,] * to_vector(beta_x[,arm]); /* n x l for each arm */
    // spline term
    for( k in 1:K_sp )
    {
      mu_x[,arm]  += to_vector(beta_w[arm][,k]' * G[k][,1:n]); /* n x 1 */
    } /* end loop k over K predictors */
  
   }/* end loop arm over convenience and reference sample arms */
  
  p        = inv_logit( mu_x );
  
  // smoothed sampling weights for convenience and reference units
  vector[n] weights_smooth_c = inv(p[,1]);
  vector[n] weights_smooth_r = inv(p[,2]);
  // inclusion probabilities in convenience and reference units for convenience units
  // use for soft thresholding
  vector[n_c] pi_c                  = p[1:n_c,1];
  vector[n_c] pi_r_c                = p[1:n_c,2];
  // normalized weights
  weights_smooth_c                  *= ((n_c+0.0)/(n+0.0)) * (sum(weights_smooth_r)/sum(weights_smooth_c));
  weights_smooth_r                  *= ((n_r+0.0)/(n+0.0));
} /* end generated quantities block */
\end{lstlisting}

\section{Simulation Results for Estimating Population Mean, $\mu$}\label{a:mu}
We use the convenience sample inclusion probabilities, $\pi_{cmi},~i \in S_{c}$, estimated from models on each Monte Carlo iteration, $m = 1,\ldots,(M=30)$, to form a population mean estimator, $\mu_{m}$. As discussed in the introduction, we use our Bayesian hierarchical model to estimate $\pi_{cmi}$, such these latent sampling probabilities may be used to construct survey estimators for focus response variables.   We construct $\mu_{m} = \frac{\sum_{i \in S_{c}} y_i/\hat{\pi}_{cmi} + \sum_{j \in S_{r}} y_j/\pi_{rmj}}{\sum_{i \in S_{c}} 1/\hat{\pi}_{cmi} + \sum_{j \in S_{r}} 1/\pi_{rmj}}$ as a sample-weighted (Hajek) survey direct estimator, so there is no additional model specified; that is, the estimator each $\mu_{m}$ assumes the underlying population values for $y_{mi}$ are fixed such the estimator is random with respect to the distribution that governs the taking of samples from that fixed population.

We propagate uncertainty in the model-based estimation of the convenience sample inclusion probabilities by taking multiple draws or imputes (e.g., $J = 10$) of each inclusion probability from its posterior distribution.  We formulate the survey direct estimator using that draw of the inclusion probabilities.   We compute the variance of the survey estimator for the population mean with respect to the survey sampling distribution.  We repeat this procedure for each draw and then compute the between draws variance variance with respect to the modeling distribution.  We put it together by using multiple imputation combining rules to construct a total variance for our survey direct estimate that now incorporates uncertainty with respect to both the model for estimating inclusion probabilities and the distribution governing the taking of samples.

More specifically, we construct a total, combined variance of the form $T = (1+ 1/J)B + \bar{U}$ based on multiple imputation rules of \citet[][See section 2.1.1]{Reiter2007}, where $T$ denotes the total variance of our $\mu$ estimator that accounts for both uncertainty with respect to drawing samples and with respect to the modeling of the inclusion probabilities used to form sampling weights.   Let $j \in 1,\ldots,M$ index a randomly drawn imputation for $(\hat{\pi}_{cji})_{i\in S_{c}}$,  $(\hat{\pi}_{rji^{'}})_{i^{'}\in S_{r}}$ from the set of MCMC samples for a model run. $\bar{U}$ denotes the within imputation sampling variance of $\mu_{j}$ and $B$ denotes the between modeled variance of $\mu_{j}$ across the $J$ imputations.

Once the total variance is computed, we then generate symmetric asymptotic intervals using the $t-$ distribution.  The use of multiple imputation allows us to propagate the uncertainty in estimation of $\pi_{cji}$ into the variance estimate for our direct estimator of $\mu_{j}$.   We next present details to construct the within impute variance, $\bar{U}$, and the between impute variance, $B$.

In what follows, we assume that we use the model-smoothed estimator, $\hat{\pi}_{rji} = \mu_{x,rji}$ (from Equation~14) for the reference sample inclusion probabilities to construct the mean statistic.   We compare simulation study results for $\mu$ using both using the fixed $\pi_{rji}$ and the model-smoothed $\hat{\pi}_{rji}$ in the sequel.

\citet{binder1996taylor} provides a general approach to Taylor linearization for computing the within impute variance. We fix an imputation $j \in (1,\ldots,M)$. For a simple weighted linear statistic such as $\mu_{j}$, the approach simplifies to calculating the variance of the weighted residuals $w_{ji} (y_i -  \mu_{j})$ with weights $w_{ji} =  \hat{\pi}_{cji} / \left(\sum_{i \in S_{c}} 1/\hat{\pi}_{cji} + \sum_{i^{'} \in S_{r}} 1/\hat{\pi}_{rji^{'}} \right)$ for convenience sample units or \newline
$w_{ji^{'}} =  \hat{\pi}_{rji^{'}} / \left(\sum_{i \in S_{c}} 1/\hat{\pi}_{cji} + \sum_{i^{'} \in S_{r}} 1/\hat{\pi}_{rji^{'}} \right)$ for reference sample units.  We average over the $J$ within-impute design (sample)-based variance estimates of $\mu_{j}$ (via Taylor linearization) to get $\bar{U}$. 

We proceed to construct the model-based, between variance $B$ by computing the variance over the $J$ imputations for $\mu_{j}$.

We first illustrate the benefit of incorporating the sample weighted convenience units with the reference sample units into the computation of $\mu$.  We then proceed to compare the pseudo likelihood methods for $\pi_{ci}$ with our two-arm exact likelihood method under combined reference and convenience sample estimation of $\mu$.   

Finally, the two-arm method co-estimates $\pi_{ri},~i \in S_{c}$ simultaneously with estimating $\pi_{ci}$.  So, on each Monte Carlo iteration, $m$, we threshold or exclude those convenience sample units, $\{\ell \in S_{c} : \pi_{rm\ell} < \epsilon\}$; that is, we exclude those convenience units that express small reference sample inclusion probabilities in order to reduce noisiness in our estimator.   We experiment with setting $\epsilon = (Q_{1},Q_{5},Q_{10})$, where $Q_p$ is the $p^{th}$ percent quantile of the distribution of smoothed $\pi_{ri},~i \in S_{r}$.

Results are presented in the Figures~\ref{fig:con_v_inf} - \ref{fig:two_arm_variations}.  Each plot panel from left to right measures the bias, root mean squared error, mean absolute deviation and coverage for estimated $\mu$.

We construct separate convenience samples under both the low- and high-overlap sampling designs used in the previous results for estimating the conditional convenience sample inclusion probabilities.  In each row of every plot panel we present the result for the low-overlap sampling design, labeled ``L" and the high-overlap sampling design, labeled "H" with those results connected by a horizontal bar.  In practice, a convenience dataset will probably lie somewhere in between L and H.

Lastly, we lay in the result for the base case that constructs $\mu$ solely from the reference sample as a dashed black vertical line in each plot panel in all of the figures.

\afterpage{\FloatBarrier}

Figure~\ref{fig:con_v_inf} compares constructing $\mu$ solely from the convenience sample in the first row to using both the reference and convenience samples (both under true inclusion weights) in the second row.  We see a dramatic improvement in the quality of estimated $\mu$ under the high-overlap convenience samples and a smaller, but still notable improvement under the low-overlap convenience samples, on the one hand, from use of solely the convenience samples, on the other hand.   

\begin{figure}
\centering
\includegraphics[width = 0.95\textwidth]{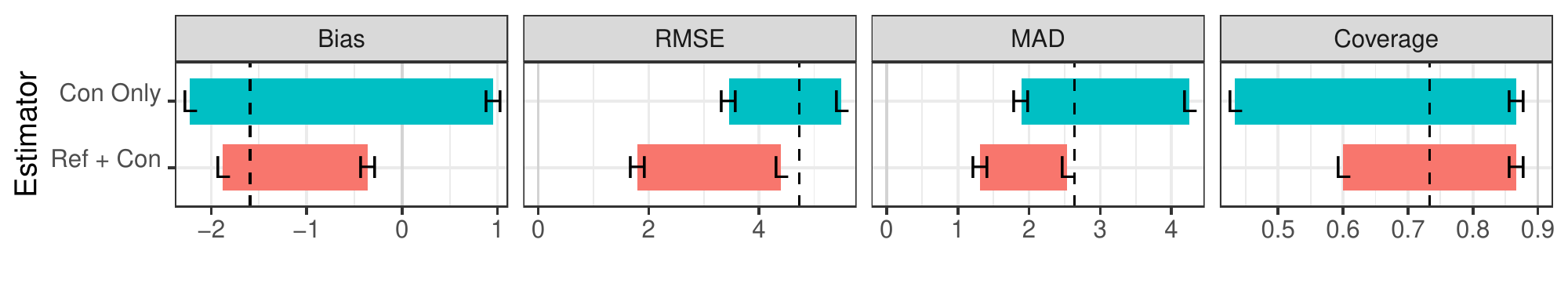}
\caption{Performance of the weighted mean estimator between high (H) and low (L) overlapping samples using the convenience and reference sample with true weights across Monte Carlo Simulations for (top to bottom) Only Convenience, Convenience and Reference Sample. Left to right: Bias, root mean square error, mean absolute deviation, coverage of 90\% intervals. Vertical reference line corresponds to using the reference sample only.
    }
\label{fig:con_v_inf}
\end{figure}
\afterpage{\FloatBarrier}

Figure~\ref{fig:main_methods_compare} compares the quality of estimated $\mu$ between our exact two-arm method (using published / known reference sample inclusion probabilities) with the pseudo likelihood methods. The first row presents the combined reference and convenience sample using the true values for the latent convenience sample weights as a comparator for all methods.   The second row presents the combined reference and convenience samples now using the estimated convenience sample inclusion weights under our two-arm method.   The performance is very similar to using the true inclusion weights for the convenience sample.   The third row presents the CLW method of \citet{doi:10.1080/01621459.2019.1677241}, which performs relatively poorly due to high estimation variation expressed by this method in estimation of $\pi_{cmi}$ for our moderate sample sizes $(n_r,n_c) = (400,\sim 800)$.   We achieve best performance for high-overlap convenience samples (labeled (H)).  The last row presents the same, but using the WVL pseudo likelihood method of \citet{2021valliant} that expresses less variation in estimation of $\pi_{cmi}$ than does CLW (though still substantially higher than our two-arm method).  Yet, even though the estimated weights under WVL are biased under both low- and high-overlap samples, the method performs similarly in estimation of $\mu$ to our two-arm method because the bias for WVL is largest at high values for $\pi_{ci}$ while most sampled units are assigned $\pi_{ci} < 0.75$.   The low-overlap samples produce notably worse coverage under WVL due to the bias and failure to account for uncertainty in $\pi_{rmi}$ by using them as plug-in.

It bears mention that even when using the true values for $\pi_{ci}$ the coverage of the estimator for $\mu$ under the low-overlap datasets fails to achieve nominal coverage because of our moderate sample sizes.  These moderate sample sizes realistically reflect the sampling of domains (e.g., geographic-by-industry for establishment surveys) used in practice.   We render an estimator using the true sampling weights in each plot panel so that we may understand the performance of the methods in context of the best possible performance.

\begin{figure}
\centering
\includegraphics[width = 0.95\textwidth]{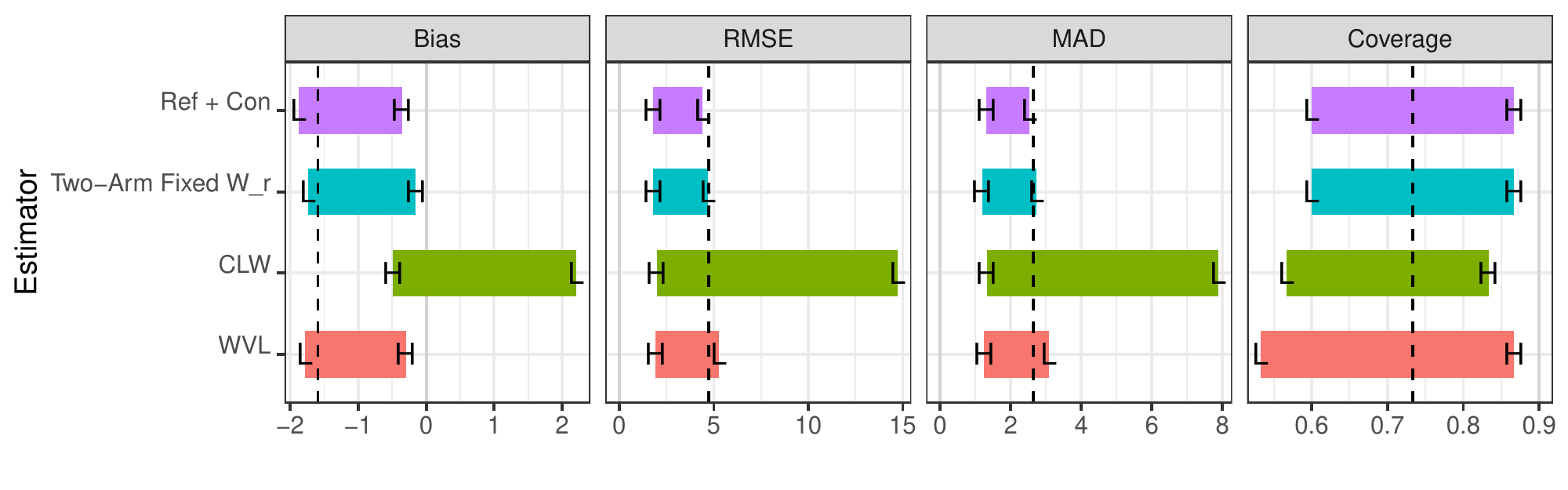}
\caption{Performance of the weighted mean estimator between high (H) and low (L) overlapping samples using the convenience with modeled weights and reference sample with true weights across Monte Carlo Simulations for (top to bottom) True weights, Two-Arm weights, CLW weights, WVL weights. Left to right: Bias, root mean square error, mean absolute deviation, coverage of 90\% intervals. Vertical reference line corresponds to using the reference sample only.
    }
\label{fig:main_methods_compare}
\end{figure}
\afterpage{\FloatBarrier}

We conclude the exploration of methods for estimating $\pi_{ci}$ on the quality of the resultant mean estimator, $\mu$, by comparing versions or variations of the two-arm method for estimating $\pi_{ci}$. So far we have seen that the two-arm method outperforms the other methods and, in particular, the pseudo likelihood methods for estimation of $\mu$ in terms of bias, means squared error and converge (uncertainty quantification).   

Since the two-arm method co-models $\pi_{ri}$ to borrow strength in estimation of $\pi_{ci}$, we may use either fixed $\pi_{ri}$ or modeled / smoothed values for $\pi_{ri}$ to form our combined reference and convenience estimator for $\mu$. The first row of Figure~\ref{fig:two_arm_variations} presents the combined reference and convenience-based estimator for $\mu$ that uses true sample weights as the benchmark comparator.  The second row uses fixed (and published) $\pi_{ri}$ along with our two-arm method for estimating $\pi_{ci}$ to produce the combined reference and convenience sample inclusion probability for $\mu$.  The third row is the same as the second except that we replace the fixed $\pi_{ri}$ with modeled or smoothed values from our two-arm model.
We observe that using smoothed weights improves the coverage for high-overlap datasets because co-modeling the $\pi_{ri}$ accounts for uncertainty in the generation of samples \citep{10.1214/19-EJS1538}.

Lastly, we next leverage our co-estimation of $\pi_{ri}$ for $i \in S_{c}$, which are typically \emph{unknown} for non-overlapping units between the two sample arms, to threshold inclusion of units from the convenience sample.   We seek to exclude those convenience sample units, $\ell \in S_{c}$ where the associated $\pi_{r\ell} < \epsilon$; that is, we exclude units from the convenience sample that are estimated with very small values for $\pi_{r\ell}$ in order to remove units that would induce noise in our estimator for $\mu$.   We see that the estimator for $\mu$ that results from setting $\epsilon = Q_{1}$ notably improves bias performance in the estimator for $\mu$ for low overlap samples while leading to only a slight increases in RMSE for high overlap.  When we increase to $\epsilon = Q_{5}$, bias increases slightly for the high-overlap datasets but further decreases for the low-overlap dataset.  The coverage performance, however, notably improves under $\epsilon = Q_{5}$ as compared to the non-thresholded two-arm-based estimator. The general pattern continues with $\epsilon = Q_{10}$.
This result suggests thresholding using $\epsilon \approx Q_{5}$ would be advisable, particularly for lower overlapping samples.

\begin{figure}
\centering
\includegraphics[width = 0.95\textwidth]{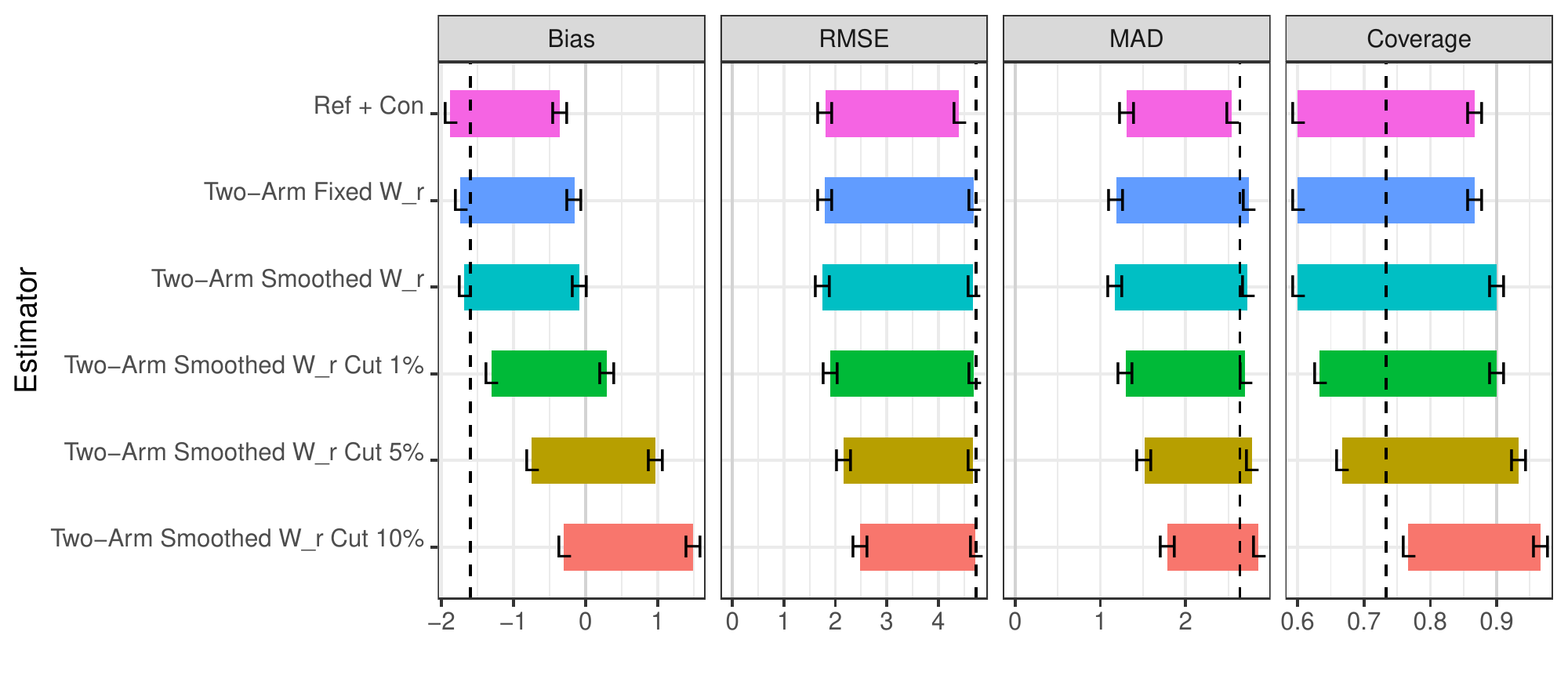}
\caption{Performance of the weighted mean estimator between high (H) and low (L) overlapping samples using variations of the two-arm method across Monte Carlo Simulations for (top to bottom) True weights for both samples, Original weights for Reference Sample, Smoothed weights for reference sample, Subset of convenience sample meeting 1\%, 5\%, and 10\% overlap threshold. Left to right: Bias, root mean square error, mean absolute deviation, coverage of 90\% intervals. Vertical reference line corresponds to using the reference sample only.
    }
\label{fig:two_arm_variations}
\end{figure}
\afterpage{\FloatBarrier}


\bibliography{ref}

\end{document}